\newcommand*{\preliminary}{}

\documentclass{article}

\ifdefined\preliminary
    \ifdefined\quomodocumque
        \newfont{\domofont}{/domo}
        
    \fi
    \newcommand{\Scomment}[1]{{\color{Violet} \fbox{\parbox{\dimexpr\linewidth-2\fboxsep-2\fboxrule}{ Sarah:  #1}}}}
    \newcommand{\Rcomment}[1]{{\color{Green}  \fbox{\parbox{\dimexpr\linewidth-2\fboxsep-2\fboxrule}{  Ryan:  #1}}}}
\else

    \newcommand{\Scomment}[1]{}
    \newcommand{\Rcomment}[1]{}

\fi

\usepackage{sarah}
\usepackage{odonnell}
\usepackage{mathrsfs}
\usetikzlibrary{arrows,automata, positioning, decorations.pathreplacing, calc}
\usepackage{cite}
\usepackage{hyperref}



\newcommand{\avgCov}{\mathrm{avgCov}}
\newcommand{\avgInfo}{\mathrm{avgInfo}}
\newcommand{\avgCovCond}[1]{\avgCov_{\mid #1}}
\newcommand{\avgInfoCond}[1]{\avgInfo_{\mid #1}}

\begin{document}

\title{Conditioning and covariance on caterpillars}

\author{Sarah R. Allen\thanks{
                     Department of Computer Science, Carnegie Mellon University.
                     This material is based upon work supported by the National Science Foundation Graduate Research Fellowship Program under Grant No. 0946825.
                    \texttt{srallen@cs.cmu.edu}}
   \and Ryan O'Donnell\thanks{
                    Department of Computer Science, Carnegie Mellon University.
                    Supported by NSF grants CCF-0747250 and CCF-1116594.  Part of this work performed at the Bo\u{g}azi\c{c}i University Computer Engineering Department, supported by Marie Curie International Incoming Fellowship project number 626373.
                    \texttt{odonnell@cs.cmu.edu}}}

\maketitle

\begin{abstract}
    Let $\bX_1, \dots, \bX_n$ be joint $\{ \pm 1\}$-valued random variables.  It is known that conditioning on a random subset of~$O(1/\eps^2)$ of them reduces their average pairwise covariance to below~$\eps$ (in expectation).  We conjecture that $O(1/\eps^2)$ can be improved to~$O(1/\eps)$.  The motivation for the problem and our conjectured improvement comes from the theory of global correlation rounding for convex relaxation hierarchies.  We suggest attempting the conjecture in the case that $\bX_1, \dots, \bX_n$ are the leaves of an information flow tree. We prove the conjecture in the case that the information flow tree is a caterpillar graph (similar to a two-state hidden Markov model).
\end{abstract}

\section{Introduction}                                  \label{sec:intro}
Let $\bX = (\bX_1, \dots, \bX_n)$ be a list of jointly distributed Boolean random variables taking values in~$\{\pm 1\}$.  We are interested in the quantity
\[
    \avg_{\substack{\text{distinct pairs} \\ u, v \in [n]}} \SET{\bigl \lvert \Cov[\bX_u,\bX_v] \bigr \rvert} \in [0,1].
\]
For brevity we call this the \emph{average covariance} of the random variables (absolute-value sign notwithstanding).  It is a quantification of the extent to which the random variables are (pairwise) independent.

If the average covariance of $\bX_1, \dots, \bX_n$ is not small, then in some sense a ``typical''~$\bX_j$ contains a considerable amount of information about a sizeable fraction of the other~$\bX_k$'s.  Then if we condition on~$\bX_j$, we might expect the variance of these other $\bX_k$'s to decrease, thereby decreasing the overall average covariance.   For $t \in \Z^+$, we introduce the following notation:
\[
    \avgCovCond{t}(\bX) \coloneqq \avg_{\substack{J \subseteq [n] \\ \abs{J} = t}} \ \ \avg_{\substack{\text{distinct pairs} \\ u, v \in [n] \setminus J}} \SET{ \E\Brak{\Abs{\Cov[\bX_u,\bX_v]} \mid (\bX_j)_{j \in J}}}.
\]
The intuitions just described lead to the idea that choosing large~$t$ should cause $\avgCovCond{t}(\bX)$ to become small. Indeed, the following has recently been proven~\cite{Guruswami-Sinop,Barak-Raghavendra-Steurer,Raghavendra-Tan}:
\begin{theorem}                                     \label{thm:r-t}
    Let $\bX = (\bX_1, \dots, \bX_n)$ be $\{\pm 1\}$-valued random variables and let $0 < \eps \leq 1$.  Then for some integer $0 \leq t \leq O(1/\eps^2)$ it holds that $\avgCovCond{t}(\bX) \leq \eps$.
\end{theorem}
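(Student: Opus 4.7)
The natural approach is a potential-function / Cauchy--Schwarz argument using average conditional variance as the potential. Define
\[
    \Phi_t \;\coloneqq\; \avg_{\substack{J\subseteq[n]\\|J|=t}}\;\avg_{u\in[n]}\; \E\!\left[\Var[\bX_u \mid (\bX_j)_{j\in J}]\right].
\]
Since each $\bX_u\in\{\pm 1\}$, we have $0 \leq \Phi_t \leq 1$, and the telescoping sum $\sum_{t=0}^{T-1}(\Phi_t - \Phi_{t+1}) = \Phi_0 - \Phi_T \leq 1$ will be the driver. The plan is to show that $\Phi_t - \Phi_{t+1}$ controls $\avgCovCond{t}(\bX)^2$, and then conclude that one of the first $O(1/\eps^2)$ values of $t$ must be good.

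The key identity is the conditional variance decomposition: for any $u,k,J$,
\[
    \E\!\left[\Var[\bX_u \mid \bX_J]\right] - \E\!\left[\Var[\bX_u \mid \bX_J,\bX_k]\right] \;=\; \E\!\left[\frac{\Cov[\bX_u,\bX_k\mid \bX_J]^2}{\Var[\bX_k\mid \bX_J]}\right] \;\geq\; \E\!\left[\Cov[\bX_u,\bX_k\mid \bX_J]^2\right],
\]
where the inequality uses $\Var[\bX_k\mid \bX_J]\leq 1$ since $\bX_k\in\{\pm 1\}$. Averaging this over $u$, $k$, and $J$ (with $|J|=t$, and choosing $k$ uniformly from $[n]\setminus J$ to extend to a size-$(t+1)$ set) gives, up to negligible boundary effects from the distinction between $u,k\in J$ vs.\ not,
\[
    \Phi_t - \Phi_{t+1} \;\gtrsim\; \avg_{J,\,u,\,k}\;\E\!\left[\Cov[\bX_u,\bX_k\mid \bX_J]^2\right].
\]

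Now apply Cauchy--Schwarz (or Jensen) to pass from the squared-covariance average back to the absolute-covariance quantity of interest:
\[
    \avgCovCond{t}(\bX)^2 \;=\; \Bigl(\avg_{J,u,k}\E\bigl[|\Cov[\bX_u,\bX_k\mid\bX_J]|\bigr]\Bigr)^{2} \;\leq\; \avg_{J,u,k}\E\bigl[\Cov[\bX_u,\bX_k\mid\bX_J]^{2}\bigr] \;\lesssim\; \Phi_t - \Phi_{t+1}.
\]
Summing over $t=0,\dots,T-1$ yields $\sum_{t<T}\avgCovCond{t}(\bX)^2 \lesssim \Phi_0 - \Phi_T \leq 1$, so if $\avgCovCond{t}(\bX)>\eps$ for every $t<T$, then $T\eps^2 \lesssim 1$, i.e., $T = O(1/\eps^2)$. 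Hence some $t\leq O(1/\eps^2)$ witnesses $\avgCovCond{t}(\bX)\leq\eps$.

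The only step I expect to require genuine care is the averaging/bookkeeping that converts the per-triple identity into an inequality about $\Phi_t - \Phi_{t+1}$: one must handle the restriction that $u,v$ range over $[n]\setminus J$ while the ``extra'' conditioning variable $\bX_k$ is drawn from the complement of $J$, and verify that the size-$t$ to size-$(t+1)$ step on uniformly random subsets couples correctly. Everything else is a one-line application of the variance decomposition and Cauchy--Schwarz.
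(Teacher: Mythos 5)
Your argument is correct, and it is a genuinely different route from the one this paper sketches (the theorem itself is quoted from prior work; Section~\ref{sec:motivation} outlines the Raghavendra--Tan proof). Their potential function is the average conditional \emph{mutual information} $\avgInfoCond{t}(\bX)$: the chain rule for entropy gives $\sum_{t<T}\avgInfoCond{t}(\bX)\leq 1$, and a Pinsker-type inequality $\abs{\Cov[\bX_u,\bX_v]}\leq\sqrt{2}\sqrt{I(\bX_u;\bX_v)}$ converts this into the covariance statement. Your potential is the average conditional \emph{variance}, and your key identity --- that for Boolean $\bX_k$ the drop $\E[\Var[\bX_u\mid\bX_J]]-\E[\Var[\bX_u\mid\bX_J,\bX_k]]$ equals $\E\bigl[\Cov[\bX_u,\bX_k\mid\bX_J]^2/\Var[\bX_k\mid\bX_J]\bigr]$ --- is exactly the law of total variance specialized to a two-valued conditioning variable (one checks directly that $\Var[\E[\bX_u\mid\bX_k]]=\Cov[\bX_u,\bX_k]^2/\Var[\bX_k]$ when $\bX_k\in\{\pm1\}$, with the convention $0/0=0$). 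The bookkeeping you flag does work out: a uniform $(t+1)$-set is a uniform $t$-set plus a uniform $k$ outside it, so the telescoping couples correctly, and restricting $u$ from $[n]$ to $[n]\setminus(J\cup\{k\})$ costs only an $n/(n-t-1)=O(1)$ factor under the paper's standing assumption $n\geq C/\eps^2+2$. What each approach buys: yours is entirely elementary and self-contained (no entropy, no Pinsker), while the information-theoretic potential extends more gracefully to larger alphabets and to the block-conditioning variants used in SOS rounding. Notably, both proofs lose the square in the same place --- you via Cauchy--Schwarz between the first and second moments of the covariance, they via the quadratic relationship between $L_1$-distance and KL-divergence --- so your argument runs into precisely the barrier the paper identifies as the obstacle to Conjecture~A.
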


We present the following conjecture, made jointly with Yuan Zhou.
\begin{named}{Conjecture A}
    Theorem~\ref{thm:r-t} holds with $O(1/\eps)$ in place of $O(1/\eps^2)$.
\end{named}
\begin{remark}                                      \label{rem:order-of-quantifiers}
    In Theorem~\ref{thm:r-t}, by $t \leq O(1/\eps^2)$ we mean $t \leq C/\eps^2$ where~$C$ is a universal constant independent of $\bX_1, \dots, \bX_n$.  (We also assume $n \geq C/\eps^2 + 2$.) However one \emph{cannot} simply fix $t = \lceil C/\eps^2 \rceil$ independently of $\bX_1, \dots, \bX_n$; this would make Theorem~\ref{thm:r-t} false (see Proposition~\ref{prop:annoying-counterexample}).  These comments apply equally to Conjecture~A with $O(1/\eps)$ in place of $O(1/\eps^2)$.
\end{remark}

Motivation for Theorem~\ref{thm:r-t} and Conjecture~A comes from the theory of rounding algorithms for convex relaxations of optimization problems; specifically, the ``correlation rounding'' technique for the Sherali--Adams and SOS hierarchies.  In Section~\ref{sec:motivation} we further discuss this motivation, as well as the importance of improving the bound $t \leq O(1/\eps^2)$ to $t \leq O(1/\eps)$.

We were led to make Conjecture~A based on algorithmic optimism as well as being unable to find any counterexample refuting it.  The following example (which we call the ``homogeneous star'') is particularly instructive.  Let $\bX_0 \sim \{\pm 1\}$ be uniformly random and suppose $\bX = (\bX_1, \dots, \bX_n)$ is a list of independent ``$\rho$-correlated'' copies of~$\bX_0$ (where $\rho \in [0,1]$). I.e., for each $j \in [n]$ we have $\bX_j = \bX_0 \bR_j$, where $\bR_1, \dots, \bR_n$ are independent $\{\pm 1\}$-valued random variables satisfying $\E[\bR_j] = \rho$. By symmetry, all sets~$J$ in the definition of $\avgCovCond{t}(\bX)$ contribute equally to the average, so suppose we condition on~$\bX_1, \dots, \bX_t$.  It is not hard to check that the conditional average covariance of $\bX_{t+1}, \dots, \bX_n$ is then
\[
    \rho^2 \Var[\bX_0 \mid \bX_1, \dots, \bX_t].
\]
If $\rho \leq \sqrt{\eps}$ then this quantity is automatically at most~$\eps$, even without conditioning.  On the other hand, if $\rho \gg \sqrt{\eps}$ then we need to rely on the conditional variance above being small.  It's not difficult to show via a Hoeffding bound that this conditional variance is very small if (and only if)~$t \rho^2 \gg 1$; i.e., $\rho \gg 1/\sqrt{t}$.  Thus by taking~$t$ a little bigger than~$1/\eps$, the case of $\rho \gg \sqrt{\eps}$ is handled as well.  In other words, these rough calculations confirm (perhaps up to a log factor) that Conjecture~A holds for the homogeneous star for every value of~$\rho$.  On the other hand, this example also implies that one cannot hope for an improved bound of $t < o(1/\eps)$ in Conjecture~A.
\subsection{Information flow trees}
Being unable to prove Conjecture~A, we turn to trying to prove it in a wide family of special cases.  Specifically, we study the conjecture in the special case of \emph{information flow trees} (which includes the homogeneous star example discussed above).  Information flow trees have been studied in an extremely wide variety of contexts, under various names: in the theory of noisy communication and computation; in statistical physics (as the \emph{Ising model} on trees); in biology (as \emph{phylogenetic trees}); and in learning theory (as Markov networks/graphical models).  See Evans et al.~\cite{evans} for a number of results, and Mossel~\cite{mossel} for a survey.
\begin{definition}
    An \emph{information flow tree} $\calT = (V, E, \rho)$ is an undirected tree graph~$(V,E)$ (with $|V| > 1$) together with a function $\rho \co E \to [-1,1]$ giving a \emph{correlation} parameter for each edge.  We think of $\calT$ as generating a collection of $\{\pm 1\}$-valued random variables $(\bX_v)_{v \in V}$, $(\bR_e)_{e \in E}$ as follows:  First, the random variables~$\bR_e \in \{\pm 1\}$ are chosen such that $\E[\bR_e] = \rho(e)$, independently for all $e \in E$.  Next, the random variables $(\bX_v)_{v \in V}$ are collectively chosen so that $\bX_u \bX_v = \bR_{(u,v)}$ holds for all $(u,v) \in E$, uniformly at random from the two possibilities.
\end{definition}
\begin{remark}                                                  \label{rem:tree-def}
    An equivalent way to think of the $(\bX_v)$ random variables being generated is as follows: First, a vertex $r \in V$ is chosen to be the ``root''.  (This choice can be arbitrary, as it does not affect the final distribution.)  Next, $\bX_r$ is chosen uniformly at random from~$\{\pm 1\}$.  Finally, the remaining random variables $(\bX_v)_{v \neq r}$ are determined by ``noisily propagating'' $\bX_r$'s value along edges of the tree:  if~$\bX_u$ has been chosen, and $(u,v) \in E$, then $\bX_v$ is set to $\bX_u$ with probability $\half + \half \rho(u,v)$ and is set to $-\bX_u$ otherwise.  We add the remark that in the end, each $\bX_v$ is individually uniformly distributed on~$\{\pm 1\}$.
\end{remark}
\begin{remark}
    When discussing information flow trees, we think of the vertex random variables~$\bX_v$ as the main objects of interest, and the edge random variables~$\bR_e$ merely as ancillary information used to construct the~$\bX_v$'s.  Furthermore, if $V = L \sqcup M$ is the partition of~$V$ into \emph{leaf} vertices~$L$ and \emph{internal} vertices~$M$, we usually think of the \emph{leaf random variables} $(\bX_v)_{v \in L}$ as being ``observable'' and the internal random variables $(\bX_v)_{v \in M}$ as being ``hidden''.
\end{remark}

In this paper we study the special case of Conjecture~A in which $\bX_1, \dots, \bX_n$ are the leaf random variables of an information flow tree. Referring to Remark~\ref{rem:order-of-quantifiers}, in this case we conjecture it \emph{is} possible to fix $t = \text{const}/\eps$ independently of~$\bX_1, \dots, \bX_n$.  Assuming we can fix~$t$ allows us to make a few more simplifications.  Since
\[
    \avgCovCond{t}(\bX) = \avg_{\substack{U \subseteq [n] \\ |U| = t+2}} \SET{\avgCovCond{t}\Paren{(\bX_k)_{k \in U}}},
\]
it follows that proving the conjecture in the $n = t+2$ case suffices to prove it for general $n \geq t+2$.  And when $n = t+2$, the experiment reduces to the following:  we choose a random pair of leaves $u$ and $v$, condition on \emph{all} other leaf random variables~$\bX_w$, and then measure the (conditional) covariance of $\bX_u, \bX_v$.  Thus we are led to the following conjecture (in which we write~$t$ instead of~$t+2$ for notational simplicity):
\begin{named}{Conjecture B}
    Let $\calT$ be an information flow tree with leaf random variables $\bX_1, \dots, \bX_t$ (where $t \geq 2$).  Then
    \begin{equation}                            \label{eqn:conj-b}
        \avg_{\substack{\text{distinct pairs} \\ u, v \in [t]}} \E\Brak{\Abs{\Cov[\bX_u,\bX_v]} \mid (\bX_j)_{j \in [t] \setminus \{u,v\}}} \leq O(1/t).
    \end{equation}
\end{named}
\noindent We emphasize that Conjecture~B implies Conjecture~A in the case that $\bX_1, \dots, \bX_n$ are the leaves of an information flow tree, and is in fact slightly stronger in that the bound is $O(1/t)$ for all~$t$, independently of $\bX_1, \dots, \bX_n$.

In Sections~\ref{sec:prelims}--\ref{sec:star} we will give some results in the direction of proving Conjecture~B; however, we are still unable to prove the conjecture.  The main theorem that we \emph{do} prove in this work is that Conjecture~B holds for \emph{caterpillars}.
\begin{named}{Theorem C}
    Conjecture~B holds when the underlying tree of~$\calT$ is a caterpillar.
\end{named}
Here we are using the following standard graph-theoretic definition:
\begin{definition}
    A \emph{caterpillar graph} is a tree in which every vertex has distance at most~$1$ from a central \emph{spine} (path).  Equivalently, a caterpillar is a graph of pathwidth~$1$.  An example of a caterpillar tree is depicted in Figure~\ref{fig:exampleCat}.
\end{definition}
\begin{figure}
\begin{center}
\begin{tikzpicture}[shorten >=1pt, auto, node distance=0.5cm]
\tikzset{every state/.style={minimum size=0pt}}

	\node[state]		(S1)						{};
	\node[state]		(S2)		[right =1cm of S1]	{};
	\node[state]		(S3)		[right =1cm of S2]	{};
	\node[state]		(S4)		[right =1.5cm of S3]	{};
	\node[state]		(S5)		[right =1cm of S4]	{};
	\node[state]		(S6)		[right =1.2cm of S5]	{};
	\node[state]		(S7)		[right=1cm of S6]	{};

	\node[state] 		(L11)		[below left =1.1cm and 0.15cm of S1]		{};
	\node[state]		(L12)		[below right = 1.1cm and 0.15cm of S1]		{};
	\node[state]		(L21)		[below =1cm of S2]		{};
	\node[state]		(L31)		[below left =1.1cm and 0.4cm of S3]		{};
	\node[state]		(L32)		[below =1cm of S3]		{};
	\node[state]		(L33)		[below right = 1.1cm and 0.4cm of S3]		{};
	\node[state] 		(L41)		[below left =1.1cm and 0.15cm of S4]		{};
	\node[state]		(L42)		[below right = 1.1cm and 0.15cm of S4]		{};
	\node[state]		(L51)		[below =1cm of S5]		{};
	\node[state] 		(L61)		[below left =1.1cm and 0.5cm of S6]		{};
	\node[state]		(L63)		[below right = 1.1cm and 0cm of S6]		{};
	\node[state] 		(L62)		[below left =1.1cm and 0cm of S6]		{};
	\node[state]		(L64)		[below right = 1.1cm and 0.5cm of S6]		{};
	\node[state]		(L71)		[below = 1cm  of S7]		{};

	\path[-] 	(S1)		edge		[above]			(S2)
			(S2)		edge		[above]			(S3)
			(S3)		edge		[above]			(S4)
			(S4)		edge		[above]			(S5)
			(S5)		edge		[above]			(S6)
			(S6)		edge		[above]			(S7)
			
			(S1)		edge		[left]				 (L11)
			(S1)		edge		[right]			 (L12)
			
			(S2)		edge						 (L21)
			
			(S3)		edge		[left]				 (L31)
			(S3)		edge		[right]			 (L32)
			(S3)		edge		[right]			 (L33)
			
			(S4)		edge		[above]			(L41)
			(S4)		edge		[above]			(L42)
			(S5)		edge		[above]			(L51)
			
			(S6)		edge		[above]			(L61)
			(S6)		edge		[above]			(L62)
			(S6)		edge		[above]			(L63)
			(S6)		edge		[above]			(L64)
			
			(S7)		edge		[above]			(L71)
	;

\end{tikzpicture}
\end{center}
\caption{An example of a caterpillar tree}
\label{fig:exampleCat}
\end{figure}
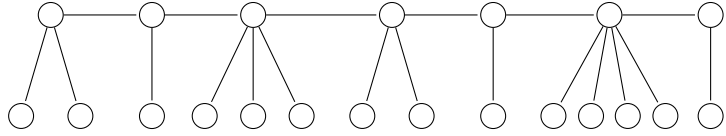
We remark that caterpillar graphs arise quite naturally in some of the contexts where information flow trees are studied; for example, in Hidden Markov Models, where the leaf random variables are observed and the spine random variables are hidden.

\subsection{Motivation and previous work}                   \label{sec:motivation}

Besides being a natural problem in information theory, Conjecture~A is motivated by certain problems in the algorithmic theory of convex relaxation hierarchies.  We give here a very high-level sketch of the connection, as developed in the following works: \cite{Guruswami-Sinop,Barak-Raghavendra-Steurer,Raghavendra-Tan,Barak-Brandao-Harrow-Kelner-Steurer-Zhou,Austrin-Benabbas-Georgiou,Yoshida-Zhou,Barak-Kelner-Steurer,Rothvoss}.

Consider a Boolean optimization problem such as Max-Cut on a graph $G = (V,E)$, where we write $V = [n]$; the task is to find a $\pm 1$ assignment $x_1, \dots, x_n$ to the vertices so as to minimize $\avg_{(u,v) \in E} x_u x_v$.  This is a non-convex (and $\mathsf{NP}$-hard) optimization problem. A natural algorithmic approach is to relax it to an (efficiently-solvable) convex optimization problem and then argue that the relaxed solution can be ``rounded'' to a genuine $\pm 1$ assignment with approximately the same value.  Two important families of such relaxations are the Sherali--Adams LP relaxation and the SOS (Lasserre--Parrilo) SDP relaxation.  The families have a tunable ``degree'' parameter~$t \in \Z^+$; as~$t$ increases, the convex relaxations become tighter and tighter but the running time for solving them increases like $n^{O(t)}$.

Roughly speaking, solving these relaxations yields an optimal solution to the original Max-Cut problem, except that instead of getting a $\pm 1$ assignment $x_1, \dots, x_n$, one gets a collection of ``fake degree-$t$ $\pm 1$-valued random variables'' $\bX_1, \dots, \bX_n$.  In fact, these are not random variables at all; they are merely a list of numbers $\rho_S$ for all $S \subseteq [n]$ with $|S| \leq t$.  However, there is a promise that for each such~$S$ there exists a collection of true $\pm 1$-valued random variables $(\bY_v)_{v \in S}$ with $\E[\prod_{v \in S} \bY_v] = \rho_s$.  Thus, being very imprecise, an algorithm can act as though it has true random variables $\bX_1, \dots, \bX_n$, as long as it only ever uses them in groups of at most~$t$.

The objective function minimized by the convex relaxation is $\alpha \coloneqq \avg_{(u,v) \in E} \rho_{\{u,v\}}$.  An algorithm would now like to take the fake random variables and produce a genuine $\pm 1$ assignment $x_1, \dots, x_n$ which has, say, $\avg_{(u,v) \in E} x_ux_v \leq \alpha + \eps$.  A simple idea for doing this is to draw~$x_j$ according to $\bX_j$, independently for each $j \in [n]$.  (This counts as using the fake random variables in groups of size~$1$ and is thus legal since $t \geq 1$.)  However in doing this we will get $\E[x_u x_v] = \E[\bX_u]\E[\bX_v] = \rho_{\{u\}} \rho_{\{v\}}$, which need not bear any relationship to the quantities $\rho_{\{u,v\}}$ entering into the definition of~$\alpha$.  What would be desirable is if we had $|\rho_{\{u,v\}} - \rho_{\{u\}}\rho_{\{v\}}| \leq \eps$ for all pairs~$(u,v)$, or at least on average over all pairs.  In other words, we wish for the ``average covariance'' (as defined at the beginning of Section~\ref{sec:intro}) of the fake random variables $\bX_1, \dots, \bX_n$ to be smaller than some~$\eps$.  Of course it need not be, but Conjecture~A implies  that it can be made so, provided we are allowed to condition on some $t \leq O(1/\eps)$ randomly chosen $\bX_j$'s.  In the end, using the Sherali--Adams or SOS relaxations with degree parameter~$t$ would allow us to do this in time~$n^{O(1/\eps)}$.

Thus we see that the quantitative dependence in Conjecture~A directly relates to the running time of algorithms based on ``correlation rounding'' of Sherali--Adams/SOS hierarchies.  An example consequence of Conjecture~A (see~\cite{Yoshida-Zhou}) would be that the Sherali--Adams LP hierarchy provides an arbitrarily good multiplicative approximation to Max-Cut on $n$-vertex, $\eps n^2$-edge graphs in time~$n^{O(1/\eps)}$. This gives a very nice tradeoff between density and running time, one that works almost all the way down to the ``sparse'' regime (i.e., $O(n)$ edges). On the other hand, using the weaker Theorem~\ref{thm:r-t}, the running time becomes $n^{O(1/\eps^2)}$.  This is only nontrivial when $\eps \gg n^{-1/2}$; i.e., for graphs with $\omega(n^{3/2})$ edges.

We end this section by commenting on the Raghavendra--Tan proof~\cite{Raghavendra-Tan} of Theorem~\ref{thm:r-t}.  They study the analog $\avgInfoCond{t}(\bX)$ of $\avgCovCond{t}(\bX)$, in which $\abs{\Cov(\bX_u,\bX_v)}$ is replaced by the \emph{mutual information}, $I(\bX_u; \bX_v) \geq 0$.  They deduce very simply from the definitions that for any $0 < T < n-1$,
\[
    \sum_{t = 0}^{T-1} \avgInfoCond{t}(\bX) \leq 1.
\]
This means that there exists a $t < T$ such that $\avgInfoCond{t}(\bX) \leq 1/T$.  The basic relationship $\abs{\Cov[\bX_u,\bX_v]} \leq \sqrt{2} \sqrt{I(\bX_u; \bX_v)}$ lets them complete the proof Theorem~\ref{thm:r-t} with a bound of $t \leq 2/\eps^2$. Thus we see that proving Conjecture~A requires surmounting a familiar difficulty: the quadratic relationship between $L_1$-distance and KL-distance.

Finally, while it's tempting to think that $\avgCovCond{t}(\bX)$ and $\avgInfoCond{t}(\bX)$  should be decreasing functions of~$t$ (thereby allowing us to fix~$t$ independently of $\bX_1, \dots, \bX_n$ in Theorem~\ref{thm:r-t} and Conjecture~A), this is not the case.
\begin{proposition}                                     \label{prop:annoying-counterexample}
    For any fixed integer $T \in \Z^+$, there exist random variables $\bX = (\bX_1, \dots, \bX_{n})$, $n = T+2$, such that $\avgCovCond{t}(\bX) = 0$ for $t < T$ but $\avgCovCond{T}(\bX) = 1$ (and similarly for $\avgInfoCond{t}$).
\end{proposition}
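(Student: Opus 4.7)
The plan is to exhibit the parity (XOR) distribution, which is the textbook example of a distribution in which every strict subset of variables is mutually independent but the full set carries one bit of global information. Concretely, let $\bX_1,\dots,\bX_{T+1}$ be independent uniform $\{\pm 1\}$ random variables and set $\bX_{T+2} \coloneqq \bX_1\bX_2\cdots\bX_{T+1}$, so that $\bX \in \{\pm 1\}^{T+2}$ is uniform on the affine subspace $\{x : x_1x_2\cdots x_{T+2}=1\}$.

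The first step is the standard observation that any $T+1$ of the $T+2$ coordinates of $\bX$ are mutually independent and uniform. This will be the engine that drives both parts. From it, the ``easy'' direction follows: for any $t < T$ and any conditioning set $J \subseteq [T+2]$ with $|J|=t$, and any distinct $u,v \in [T+2]\setminus J$, the triple-plus-tail $J \cup \{u,v\}$ has size $t+2 \leq T+1$, so the restriction of $\bX$ to $J \cup \{u,v\}$ is uniform and mutually independent. This makes $\bX_u$ and $\bX_v$ conditionally independent given $(\bX_j)_{j \in J}$, hence conditional covariance and conditional mutual information are both identically zero, and $\avgCovCond{t}(\bX)=\avgInfoCond{t}(\bX)=0$.

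Next I handle $t=T$. Here $|[T+2]\setminus J|=2$; call the two leftover indices $u,v$. The global constraint $\prod_{i=1}^{T+2}\bX_i = 1$ gives $\bX_u \bX_v = \prod_{j\in J}\bX_j$, which is a $\pm 1$ value determined by the conditioning. Meanwhile each of $\bX_u,\bX_v$ is individually uniform on $\{\pm 1\}$ (again because any $T+1$ coordinates are independent uniform, so in particular $\bX_u$ alone is uniform even after conditioning on $J$). Therefore
\[
    \E\bigl[\bX_u\bX_v \mid (\bX_j)_{j\in J}\bigr]=\pm 1,\qquad \E[\bX_u \mid \cdot]=\E[\bX_v \mid \cdot]=0,
\]
so $|\Cov[\bX_u,\bX_v \mid (\bX_j)_{j\in J}]|=1$ pointwise. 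Averaging over the unique pair and over the $\binom{T+2}{T}$ choices of $J$ yields $\avgCovCond{T}(\bX)=1$. The same computation shows that conditionally on $J$ the pair $(\bX_u,\bX_v)$ takes two values each with probability $1/2$, so $H(\bX_u \mid \bX_v, J)=0$ and $I(\bX_u;\bX_v\mid J)=H(\bX_u\mid J)=1$, giving $\avgInfoCond{T}(\bX)=1$ as well.

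I do not anticipate a real obstacle: the only thing to be careful about is the quantifier on $J$ in the definition of $\avgCovCond{t}$, since one has to verify conditional independence for \emph{every} $J$ of size $t<T$, not just a typical one. This is immediate from the symmetry of the parity distribution and the ``any $T+1$ are independent'' property, so the write-up reduces to the three short displayed computations above.
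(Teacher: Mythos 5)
Your proposal is correct and uses exactly the same construction as the paper: the uniform distribution on $\{x \in \{\pm 1\}^{T+2} : x_1\cdots x_{T+2}=1\}$, with the same two observations (any proper conditioning leaves the remaining pair conditionally independent; conditioning on $T$ coordinates pins down the product of the remaining two while each stays uniform). Your write-up just spells out the ``any $T+1$ coordinates are i.i.d.\ uniform'' fact and the mutual-information computation in slightly more detail than the paper does.
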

\begin{proof}
    We simply define $\bX_1, \dots, \bX_{T+2}$ to be uniformly random conditioned on $\bX_1 \bX_2 \cdots \bX_{T+2} = 1$.  Then consider any $J \subset [T+2]$ and any outcome of $(\bX_j)_{j \in J}$. If $\abs{J} < T$ then the remaining $\bX_k$'s are (conditionally) pairwise independent. On the other hand, if $\abs{J} = T$ then the remaining pair $(\bX_u,\bX_v)$ is either uniform on $\{(+1,+1), (-1,-1)\}$ or uniform on $\{(+1,-1), (-1,+1)\}$; in either case, the (conditional) covariance is~$1$.
\end{proof}

\subsection{Organization of this paper}
In Section~\ref{sec:prelims}, we describe some basic transformations on information flow trees that preserve the joint distribution on the leaf random variables. These allow us to make certain convenient assumptions about the structure of our information flow trees in in subsequent sections. 
Section~\ref{sec:exp-cov-formula} contains an explicit formula for the covariance of two leaves in an information flow tree conditioned on some outcome of the other leaves.
In Section~\ref{sec:monotonicity} we demonstrate that this expression is nondecreasing as a function of edge correlations along the spine.  This essentially lets us reduce a caterpillar tree to an inhomogeneous star; we analyze the latter in Section~\ref{sec:star}. Finally, the proof of Theorem~C is given in Section~\ref{sec:final}.

\section{Information flow tree equivalences}                          \label{sec:prelims}

Given an information flow tree, there are several ways it can be modified so that the joint distribution of its leaf random variables does not change.  Since Conjecture~B and Theorem~C are only concerned with the leaf random variables, and not the ``internal'' random variables, we are free to make such modifications.  We use the following definition:
\begin{definition}
    Let $\calT$ and $\calT'$ be information flow trees, generating random variables $(\bX_v)_{v \in V}$ and $(\bX'_{v'})_{v' \in V'}$.  Further, assume that~$V$ and~$V'$ have the same set of leaves,~$L$ (though $\calT$ and $\calT'$ may otherwise have different tree topologies and correlation functions).  We say $\calT$ and $\calT'$ are \emph{equivalent} if $(\bX_\ell)_{\ell \in L}$ and $(\bX'_{\ell})_{\ell \in L}$ have the same joint distribution.
\end{definition}
\noindent In this section we describe some transformations on general information flow trees that put them into simpler, equivalent forms.

The first two transformations allow us to assume without loss of generality that $\rho(e) \geq 0$ for all edges~$e$, except possibly for edges incident to leaves.  (In fact, allowing correlations in $[-1,0)$ is not really an essential aspect of our model; the reader will not lose much by simply assuming $\rho \geq 0$ always.)
\begin{lemma}                                       \label{lem:push-negations1}
    Let $\calT = (V,E,\rho)$ be an information flow tree and let $w \in V$ be an internal vertex.  Let $\calT' = (V,E,\rho')$ be the information flow tree that is the same as~$\calT$ except with $\rho'(e) = -\rho(e)$ for all edges~$e$ incident on~$w$.  Then $\calT$ and $\calT'$ are equivalent.
\end{lemma}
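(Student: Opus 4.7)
The plan is to construct an explicit measure-preserving bijection between the joint sample spaces of $\calT$ and $\calT'$ that acts as the identity on every leaf random variable. Given an outcome $(x_v, r_e)_{v \in V,\, e \in E}$ of $(\bX_v), (\bR_e)$ under $\calT$, I would define the corresponding outcome under $\calT'$ by setting $x'_w \coloneqq -x_w$ and $x'_v \coloneqq x_v$ for $v \neq w$, together with $r'_e \coloneqq -r_e$ for every edge $e$ incident on $w$ and $r'_e \coloneqq r_e$ on all other edges. This map is an involution on outcomes.

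The verification then splits into two routine checks. First, the $\bR'_e$ must be independent with $\E[\bR'_e] = \rho'(e)$: negating a subset of independent $\{\pm 1\}$-valued random variables preserves independence and flips the means from $\rho(e)$ to $-\rho(e) = \rho'(e)$ exactly on the edges incident on $w$, which is what $\calT'$ prescribes. Second, the consistency relations $\bX'_u \bX'_v = \bR'_{(u,v)}$ must be checked edge-by-edge. For edges not incident on $w$, both sides are unchanged from $\calT$; for an edge $(u,w)$, one computes $x'_u x'_w = x_u \cdot (-x_w) = -r_{(u,w)} = r'_{(u,w)}$. Together these show the pushforward of the $\calT$-distribution under the bijection is the $\calT'$-distribution.

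Because $w$ is internal, $x'_\ell = x_\ell$ for every leaf $\ell$, so the joint distribution of the leaf random variables is identical under $\calT$ and $\calT'$, which is exactly the definition of equivalence. There is no substantial obstacle here; the only point that genuinely uses the hypothesis is that \emph{internal} is needed, since if $w$ were a leaf the same bijection would still carry the $\calT$-distribution to the $\calT'$-distribution on all of $(\bX_v, \bR_e)$, but the leaf marginal would differ by a global sign flip at $w$ and the trees would fail to be equivalent.
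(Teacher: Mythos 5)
Your proposal is correct and is essentially the same argument as the paper's: both define the involution that negates $\bX_w$ and the $\bR_e$ on edges incident to $w$, verify it carries the $\calT$-distribution to the $\calT'$-distribution, and note that the leaf variables are untouched because $w$ is internal. Your write-up just spells out the "easy to see" verifications in slightly more detail.
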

\begin{proof}
    Let $(\bX_v)_{v \in V}$ and $(\bR_e)_{e \in E}$ be the random variables generated by~$\calT$.  Define:
    \[
        \bR'_e = \begin{cases}
                     -\bR_e & \text{if $e$ is incident to~$w$,} \\
                     \phantom{-}\bR_e & \text{otherwise;}
                 \end{cases}
        \quad
        \bX'_v = \begin{cases}
                     -\bX_v & \text{if $v = w$,} \\
                     \phantom{-}\bX_v & \text{otherwise.}
                 \end{cases}
    \]
    It's easy to see that $(\bR'_{e})_{e \in E}$ has the correct joint distribution for~$\calT'$.  It's then easy to see that $(\bX'_v)_{v \in V}$ has the correct joint distribution for~$\calT'$. Since~$w$ is not a leaf, we have $\bX_\ell = \bX'_\ell$ for all leaves~$\ell$.  Thus $\calT$ and $\calT'$ are equivalent.
\end{proof}

\begin{lemma}                                       \label{lem:push-negations2}
    For every information flow tree~$\calT = (V,E,\rho)$, there is an equivalent one~$\calT' = (V,E,\rho')$ in which $\rho'(e) \geq 0$ for all ``internal'' edges~$e$; i.e., for edges~$e$ not touching a leaf.
\end{lemma}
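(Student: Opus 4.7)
The plan is to apply Lemma~\ref{lem:push-negations1} iteratively at a carefully chosen sequence of internal vertices. First I would observe that the internal vertices $M \subseteq V$ induce a (connected) subtree $\calT_M$ of $\calT$ whose edge set is exactly the set of internal edges: for any two internal vertices $u,v$, every intermediate vertex on the unique path from $u$ to $v$ in $\calT$ has degree at least two and hence cannot be a leaf.

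If $\calT_M$ has no edges (for example if $\calT$ is a star or a single edge), there is nothing to prove. Otherwise, I would root $\calT_M$ at an arbitrary internal vertex $r$ and process the internal edges of $\calT$ in any order in which each edge comes after its parent edge (for instance, BFS order from $r$). For each internal edge $(u,v)$ with $v$ the child, if the current value of $\rho(u,v)$ is negative, apply Lemma~\ref{lem:push-negations1} at the internal vertex $v$; this is permitted since $v \in M$. Doing so replaces $\rho$ by $-\rho$ on every edge incident to $v$, yielding an equivalent information flow tree.

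The key point to verify is that this procedure never undoes work already done. The edges affected by flipping at $v$ are (i) the parent edge $(u,v)$, whose sign becomes positive as desired; (ii) internal edges from $v$ to children of $v$ in $\calT_M$, which have not yet been processed; and (iii) leaf edges incident to $v$, which the lemma allows us to alter freely. No previously-processed internal edge is disturbed, because such an edge lies on the $r$-side of $(u,v)$ in $\calT_M$ and thus is not incident to $v$. Consequently, after all internal edges have been processed, the resulting tree $\calT'$ is equivalent to $\calT$ (by repeated application of Lemma~\ref{lem:push-negations1}) and satisfies $\rho'(e) \geq 0$ on every internal edge. The only obstacles are the two structural observations above---that $\calT_M$ is a subtree and that top-down processing preserves previously-fixed edges---both of which are routine.
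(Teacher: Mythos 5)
Your proof is correct and follows essentially the same route as the paper's: root the tree, sweep the internal edges top-down, and apply Lemma~\ref{lem:push-negations1} at the child endpoint of each negative internal edge, noting that each flip only touches the current parent edge, not-yet-processed descendant edges, and leaf edges. Your explicit verification that previously-fixed edges are undisturbed is exactly the point the paper leaves as ``easy to see.''
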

\begin{proof}
    Given~$\calT$, choose a root vertex $r \in V$ arbitrarily.  The idea is that in a top-down fashion starting from~$r$, we ``fix'' all negative internal edges.  Specifically, we apply the following procedure:\\

    \hspace*{1in}  for $j = 1, 2, 3, \dots$,

    \hspace*{1in} \quad for each non-leaf vertex~$w$ at distance~$j$ from~$r$,

    \hspace*{1in} \quad \quad if the parent edge~$e$ of~$w$ has $\rho(e) < 0$,

    \hspace*{1in} \quad \quad \quad apply the transformation from Lemma~\ref{lem:push-negations2} to~$w$.\\

    It is easy to see that this procedure terminates with an equivalent information flow tree~$\calT'$ in which all internal edges have a nonnegative correlation value.
\end{proof}

Correctness of the next two transformations follows from the fact that if $\bR_1$, $\bR_2$ are independent $\{\pm 1\}$-valued random variables with expectations $\rho_1, \rho_2$, then $\bR_1\bR_2$ is a $\{\pm 1\}$-valued random variable with expectation $\rho_1\rho_2$:
\begin{lemma}                                       \label{lem:merge}
    Suppose $\calT = (V,E,\rho)$ is an information flow tree, $v \in V$ has degree~$2$, and $(e_1,e_2)$ is the length-two path of edges through~$v$.  Modify~$\calT$ by deleting $v$ and replacing $(e_1,e_2)$ with a single edge~$e$ satisfying $\rho(e) = \rho(e_1)\rho(e_2)$.  Then the resulting information flow tree is equivalent to the original~$\calT$.
\end{lemma}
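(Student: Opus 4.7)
The plan is to directly couple the random variables generated by $\calT$ with those generated by the modified tree, and verify the coupling induces the same joint distribution on the leaf variables. The key algebraic observation (stated just before the lemma) is that if $\bR_1, \bR_2$ are independent $\{\pm 1\}$-valued with means $\rho_1, \rho_2$, then $\bR_1 \bR_2$ is $\{\pm 1\}$-valued with mean $\rho_1 \rho_2$ and is independent of any random variable independent of the pair $(\bR_1, \bR_2)$.

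First I note that a degree-$2$ vertex $v$ in a tree (with $\abs{V} > 1$) must be an internal vertex, so $v \notin L$ and neither $\calT$ nor $\calT'$ has $v$ as a leaf; both have the same leaf set. Let $e_1 = (u,v)$ and $e_2 = (v,w)$, and let $(\bX_a)_{a \in V}$, $(\bR_f)_{f \in E}$ be the random variables generated by $\calT$. I would define candidate random variables for $\calT'$ by setting $\bX'_a \coloneqq \bX_a$ for every $a \in V \setminus \{v\}$, $\bR'_f \coloneqq \bR_f$ for every $f \in E \setminus \{e_1, e_2\}$, and $\bR'_e \coloneqq \bR_{e_1} \bR_{e_2}$ for the new edge $e$.

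Next I would check that these satisfy the definition of an information flow tree for $\calT'$. Independence of the $\bR'_f$'s follows because the only new variable, $\bR'_e$, is a function of $(\bR_{e_1}, \bR_{e_2})$ alone, which is independent of all other $\bR_f$; its marginal has expectation $\rho(e_1)\rho(e_2) = \rho(e)$. The edge constraint at the new edge $e = (u,w)$ holds since
\[
    \bX'_u \bX'_w = \bX_u \bX_w = (\bX_u \bX_v)(\bX_v \bX_w) = \bR_{e_1} \bR_{e_2} = \bR'_e,
\]
while all other edge constraints carry over from $\calT$. Finally, conditioned on $(\bR_f)_{f \in E}$, the vector $(\bX_a)_{a \in V}$ is uniform over the two sign-flipped possibilities consistent with the constraints; deleting coordinate $v$ leaves two uniform possibilities for $(\bX'_a)_{a \neq v}$ consistent with the $\bR'$-constraints, as required.

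Since $\bX'_\ell = \bX_\ell$ for every leaf $\ell \in L$, the joint distribution of the leaf variables is identical, proving equivalence. I do not anticipate any substantive obstacle here: the entire argument is a direct verification, and the only mild subtlety is ensuring uniformity of the conditional $(\bX'_a)$ distribution survives the deletion of $v$, which is immediate because the quotient tree is still connected and a global sign flip of $\calT$ projects to a global sign flip of $\calT'$.
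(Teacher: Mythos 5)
Your proof is correct and is exactly the paper's (unwritten) argument: the paper justifies Lemma~\ref{lem:merge} in one sentence by the fact that a product of independent $\{\pm 1\}$-valued variables with means $\rho_1,\rho_2$ has mean $\rho_1\rho_2$, and your coupling via $\bR'_e \coloneqq \bR_{e_1}\bR_{e_2}$, $\bX'_a \coloneqq \bX_a$ is the natural fleshing-out of that remark. The details you supply (independence of the new edge variable, the edge constraint $\bX_u\bX_w = \bR_{e_1}\bR_{e_2}$, and the survival of the uniform global sign flip) are all verified correctly.
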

\begin{lemma}                                       \label{lem:split}
    Let $\calT = (V,E,\rho)$ be an information flow tree and let $e \in E$.  Modify $\calT$ by splitting~$e$ into a length-two path $(e_1,e_2)$ with $\rho(e_1)\rho(e_2) = \rho(e)$.  Then the resulting information flow tree is equivalent to the original~$\calT$.
\end{lemma}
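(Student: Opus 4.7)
The plan is to observe that the split operation described here is precisely the inverse of the merge operation from Lemma~\ref{lem:merge}, and then to appeal to that lemma directly without any further work.

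To set this up, let $\calT' = (V', E', \rho')$ denote the information flow tree obtained from $\calT$ by splitting $e = (u,v)$: we introduce a new vertex~$w$, remove~$e$, and add edges $e_1 = (u,w)$ and $e_2 = (w,v)$ with $\rho'(e_1) \rho'(e_2) = \rho(e)$. The key point is that~$w$ has degree exactly~$2$ in $\calT'$ and is therefore an internal vertex, so the leaf sets of $\calT$ and $\calT'$ coincide ($u$ and $v$ also keep their leaf/internal status in $\calT'$, since $e_1$ replaces $e$ at $u$ and $e_2$ replaces $e$ at $v$).

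The main step is then to apply Lemma~\ref{lem:merge} to $\calT'$ at the vertex~$w$: it tells us that deleting~$w$ and collapsing the length-two path $(e_1, e_2)$ into a single edge with correlation $\rho'(e_1) \rho'(e_2)$ yields an information flow tree equivalent to $\calT'$. By our choice of $\rho'(e_1)$ and $\rho'(e_2)$, this collapsed edge has correlation $\rho(e)$, so the result of the merge is precisely $\calT$. Hence $\calT$ and $\calT'$ are equivalent, as desired.

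There is essentially no obstacle here; the lemma is an immediate corollary of Lemma~\ref{lem:merge} once one recognizes that the split and merge operations are formal inverses. If one preferred a self-contained argument, one could mirror the proof style of Lemma~\ref{lem:push-negations1}: on the $\calT'$ side, generate $\bR'_{e_1}, \bR'_{e_2}$ independently with the prescribed expectations, define $\bX'_w = \bX_u \bR'_{e_1}$, set $\bX'_v = \bX_u \bR'_{e_1} \bR'_{e_2}$, and leave all other variables alone; then verify that $\bR'_{e_1} \bR'_{e_2}$ has the same $\{\pm 1\}$-distribution with expectation $\rho(e)$ as $\bR_e$, so the joint distribution of the leaf variables is unchanged. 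The merge-lemma route is cleaner, and I would take that approach.
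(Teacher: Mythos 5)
Your proof is correct: the paper gives no separate argument for this lemma, justifying both it and Lemma~\ref{lem:merge} by the single observation that a product of independent $\{\pm1\}$ signs with means $\rho_1,\rho_2$ is a sign with mean $\rho_1\rho_2$, and your reduction of the split operation to the merge lemma (noting the new vertex has degree~$2$ and the leaf set is unchanged, and using symmetry of equivalence) is a clean formalization of the same idea. The only blemish is in your optional self-contained sketch, where redefining $\bX'_v$ without coupling $\bR'_{e_1}\bR'_{e_2}$ to equal $\bR_e$ would also perturb the variables downstream of $v$; but since your main route via Lemma~\ref{lem:merge} is complete, this does not affect the proof.
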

The next two transformations are similar and use the fact that along a path in which all correlations are~$1$, the vertex random variables are always equal.
\begin{lemma}                                       \label{lem:glue}
    Let $\calT = (V,E,\rho)$ be an information flow tree and let $P = (V',E')$ be a connected subgraph of~$(V,E)$ in which $\rho(e) = 1$ for all $e \in E'$.  (For us, $P$ will typically be a path.)  Assume~$V'$ does not contain leaves of~$V$.  Then the information flow tree gotten from~$\calT$ by contracting~$V'$ into a single vertex is equivalent to~$\calT$.
\end{lemma}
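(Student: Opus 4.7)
The plan is to exploit the observation that an edge of correlation $1$ forces its two endpoints to carry identical random variables. Specifically, if $(u,v) \in E'$ then $\bR_{(u,v)} \in \{\pm 1\}$ with $\E[\bR_{(u,v)}] = 1$, so $\bR_{(u,v)} = 1$ almost surely and hence $\bX_u = \bX_v$. Since $P$ is connected, chaining this equality along paths inside $E'$ shows that all the random variables $(\bX_v)_{v \in V'}$ coincide almost surely with a single $\{\pm 1\}$-valued random variable, which I will call $\bZ$.

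Next I would build a direct coupling realizing the contracted tree $\calT'$ from the variables of $\calT$. Write $w$ for the contracted vertex of $\calT'$, and define $\bX'_w \coloneqq \bZ$ together with $\bX'_v \coloneqq \bX_v$ for every $v \neq w$. Each edge $e'$ of $\calT'$ has a unique corresponding edge $e \in E \setminus E'$ of $\calT$ (either $e$ already had both endpoints outside $V'$, or $e$ ran from some $u \in V'$ to a vertex outside $V'$ and we relabel the endpoint $u$ as $w$); set $\bR'_{e'} \coloneqq \bR_e$. The family $(\bR'_{e'})$ is then a subcollection of the independent $(\bR_e)_{e \in E}$ with the correct marginal means, so it has the correct joint distribution for $\calT'$. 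It remains to verify $\bX'_x \bX'_y = \bR'_{e'}$ for every edge $e' = (x,y)$ of $\calT'$. If both endpoints lie outside $V'$, this is the identity inherited directly from $\calT$. Otherwise $e' = (w,v)$ corresponds to an edge $e = (u,v)$ of $\calT$ with $u \in V'$, and $\bX'_w \bX'_v = \bZ \cdot \bX_v = \bX_u \bX_v = \bR_e = \bR'_{e'}$.

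Finally, because $V'$ contains no leaves of $\calT$, the leaf set of $\calT'$ is identical to that of $\calT$ and the leaf random variables are literally equal under this coupling, so the two trees induce the same joint distribution on the common leaf set. I do not foresee any real obstacle: the whole argument reduces to the deterministic identification of vertex variables along correlation-$1$ edges, together with a straightforward bookkeeping check that the natural contraction preserves both the independence of the edge variables and the defining identities $\bX_x \bX_y = \bR_{(x,y)}$.
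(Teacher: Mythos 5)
Your proposal is correct and follows exactly the route the paper intends: the paper states this lemma without proof, noting only that ``along a path in which all correlations are $1$, the vertex random variables are always equal,'' and your argument is the natural fleshing-out of that fact via an explicit coupling, in the same style as the paper's proof of Lemma~\ref{lem:push-negations1}. The only detail worth a half-sentence more is that, given the edge variables, the law of $(\bX_v)$ is uniform over the two consistent global assignments, and your map sends a global sign flip to a global sign flip, so the contracted family also has the correct (uniform-over-two-possibilities) conditional law.
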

\begin{lemma}                                       \label{lem:vertex-split}
    Let $\calT = (V,E,\rho)$ be an information flow tree and let $v \in V$.  For any $m \in \Z^+$, suppose we delete~$v$ and replace it with a path $(w_1, \dots, w_m)$ whose edges are assigned correlation~$1$ by~$\rho$.  For each edge $e = (u,v)$ formerly attached to~$v$, we replace it with $e = (u,w_i)$ for an arbitrarily chosen $i \in [m]$.  Then the resulting information flow tree is equivalent to the original~$\calT$.  An example of such a transformation is depicted in Figure~\ref{fig:splitvertex}.
\end{lemma}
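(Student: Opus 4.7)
The plan is to deduce this from Lemma~\ref{lem:glue}. Observe that the described transformation is precisely the inverse of contracting a path of correlation-$1$ edges: the new tree $\calT'$ contains the path $P = (w_1, w_2, \dots, w_m)$ whose edges all carry correlation~$1$, and contracting $P$ back to a single vertex (while reidentifying each reassigned edge $(u, w_i)$ with $(u, v)$) recovers exactly the original $\calT$. Thus, provided the hypotheses of Lemma~\ref{lem:glue} are met for the subgraph induced by $V' = \{w_1, \dots, w_m\}$ in $\calT'$, we immediately obtain $\calT' \equiv \calT$.

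The only nontrivial hypothesis to check is that none of the $w_i$ is a leaf of $\calT'$. For any ``middle'' vertex $w_i$ with $1 < i < m$, the path alone contributes degree~$2$, so $w_i$ is not a leaf regardless of how the reassigned edges are distributed. For the endpoints $w_1$ and $w_m$, the path contributes only degree~$1$, so they are non-leaves precisely when each receives at least one reassigned edge formerly incident to $v$ (and of course $v$ itself must be internal in $\calT$, so it has at least two such edges to distribute). This mild constraint on the ``arbitrary'' choice of reassignments, visible in Figure~\ref{fig:splitvertex}, is the implicit assumption of the lemma and is exactly what ensures that the leaf set of $\calT'$ matches that of $\calT$ as required by the definition of equivalence.

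If one wishes to bypass Lemma~\ref{lem:glue}, an essentially identical coupling argument works directly: since $\rho(w_i, w_{i+1}) = 1$ for every $i$, the edge random variables $\bR'_{(w_i, w_{i+1})}$ are almost surely $+1$, and hence $\bX'_{w_1} = \bX'_{w_2} = \cdots = \bX'_{w_m}$ almost surely. Calling this common value $\bX''_v$ and setting $\bR''_{(u,v)} := \bR'_{(u, w_i)}$ for each edge formerly attached to $v$, one checks that $((\bX'_x)_{x \notin V'}, \bX''_v)$ together with the corresponding edge variables has the correct joint distribution for $\calT$. The main obstacle in the whole argument is merely the leaf-set bookkeeping just discussed; the distributional equivalence itself is immediate once the path is known to collapse to a single point almost surely.
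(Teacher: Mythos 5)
Your proof is correct and matches the paper's (implicit) justification: the paper gives no separate proof of this lemma, only the preceding remark that along a path of correlation-$1$ edges the vertex random variables are always equal, which is exactly your direct coupling argument (and your reduction to Lemma~\ref{lem:glue} is just a repackaging of the same fact). Your observation about the endpoints $w_1, w_m$ needing to receive at least one reassigned edge so that the leaf set is preserved is a fair reading of the ``arbitrarily chosen'' clause, and is indeed satisfied in every application the paper makes of the lemma.
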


\newcommand{\firstcolor}{ForestGreen}
\newcommand{\secondcolor}{OrangeRed}
\newcommand{\thirdcolor}{Cerulean}

\begin{figure}
\begin{center}
\begin{tikzpicture}[ shorten >=1pt,auto,node distance=2cm]
	{\color{\firstcolor}
	\node[state]		(S1)						{$v_1$};
	}
	{\color{\secondcolor}
	\node[state]		(S2)		[right =2.5cm of S1]	{$v_2$};
	}
	{\color{\thirdcolor}
	\node[state]		(S3)		[right =3cm of S2]	{$v_3$};
	}

	{\color{\firstcolor}
	\node[state] 		(L11)		[below left =1.5cm and 0.3cm of S1]		{$\ell_{1}$};
	\node[state]		(L12)		[below right = 1.5cm and 0.3cm of S1]		{$\ell_{2}$};
	}
	{\color{\secondcolor}
	\node[state]		(L21)		[below =1.2cm of S2]		{$\ell_{3}$};
	}
	{\color{\thirdcolor}
	\node[state]		(L31)		[below left =1.5cm and 0.8cm of S3]		{$\ell_{4}$};
	\node[state]		(L32)		[below =1.2cm of S3]		{$\ell_{5}$};
	\node[state]		(L33)		[below right = 1.5cm and 1.5cm of S3]		{$\ell_{6}$};
	}

	\path[-] 	(S1)		edge		[above]			node{$\rho_1$}		(S2)
			(S2)		edge		[above]			node{$\rho_2$}		(S3)
			
			(S1)		edge		[left]				node{$\rho_3$} (L11)
			(S1)		edge		[right]			node{$\rho_4$} (L12)
			
			(S2)		edge						node{$\rho_5$} (L21)
			
			(S3)		edge		[left]				node{$\rho_6$} (L31)
			(S3)		edge		[right]			node{$\rho_7$} (L32)
			(S3)		edge		[right]			node{$\rho_8$} (L33)

	;

\end{tikzpicture}
{\Large$$\Downarrow$$}
\begin{tikzpicture}[shorten >=1pt,auto,node distance=2cm]
	{\color{\firstcolor}
	\node[state]		(V1)						{$w_{1_1}$};
	\node[state]		(V2)		[right of =V1]	{$w_{1_2}$};
	}
	{\color{\secondcolor}
	\node[state]		(V3)		[right of =V2]	{$w_{2_1}$};
	}
	{\color{\thirdcolor}
	\node[state]		(V4)		[right of =V3]	{$w_{3_1}$};
	\node[state]		(V5)		[right of =V4]	{$w_{3_2}$};
	\node[state]		(V6)		[right of =V5]	{$w_{3_3}$};
	}

	{\color{\firstcolor}
	\node[state] 		(L1)		[below of = V1]		{$\ell_{1}$};
	\node[state]		(L2)		[below of = V2]		{$\ell_{2}$};
	}
	{\color{\secondcolor}
	\node[state]		(L3)		[below of = V3]		{$\ell_{3}$};
	}
	{\color{\thirdcolor}
	\node[state]		(L4)		[below of = V4]		{$\ell_{4}$};
	\node[state]		(L5)		[below of = V5]		{$\ell_{5}$};
	\node[state]		(L6)		[below of = V6]		{$\ell_{6}$};
	}

	\path[-] 	(V1)		edge		[above]			node{$1$}			(V2)
			(V2)		edge		[above]			node{$ \rho_1$}		(V3)
			(V3)		edge		[above]			node{$\rho_2$}		(V4)
			(V4)		edge		[above]			node{$1$}			(V5)
			(V5)		edge		[above]			node{$1$}			(V6)
			
			(V1)		edge						node{$\rho_3$} 	(L1)
			(V2)		edge						node{$\rho_4$ } 	(L2)
			(V3)		edge						node{$\rho_5$ } 	(L3)
			(V4)		edge						node{$\rho_6$ } 	(L4)
			(V5)		edge						node{$\rho_7$ } 	(L5)
			(V6)		edge						node{$\rho_7$ } 	(L6)

	;
\end{tikzpicture}\\

\end{center}
\caption{Applying the transformation of Lemma~\ref{lem:vertex-split} to $v_1$ and $v_3$}
\label{fig:splitvertex}
\end{figure}
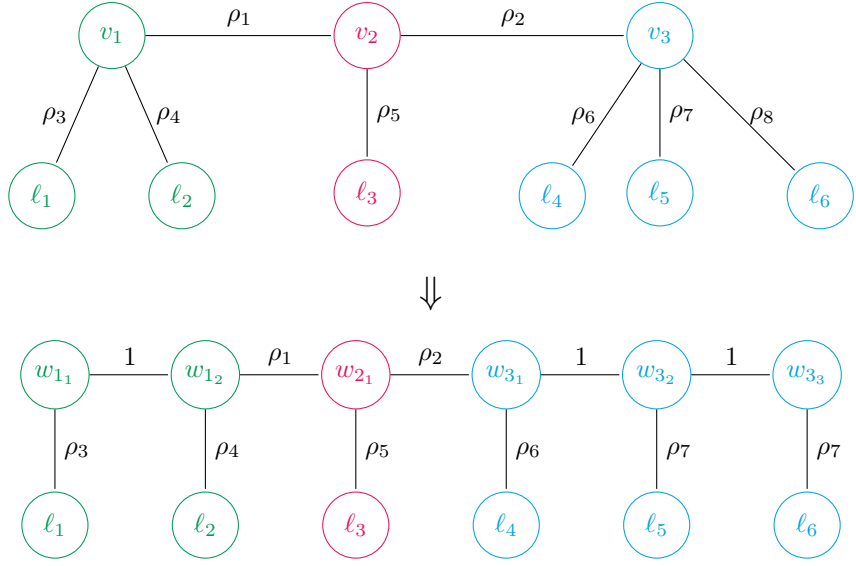

The next transformation allows us to convert to trees of maximum degree~$3$.  Indeed, we can say slightly more.
\begin{lemma}                                       \label{lem:degree-3-ize}
    For each information flow tree $\calT$, there is an equivalent one $\calT'$ in which the underlying graph has maximum degree~$3$.  Indeed, we can take $\calT'$ to be a rooted binary tree in which each internal node has exactly~$2$ children.
\end{lemma}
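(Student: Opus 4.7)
The plan is to combine Lemmas~\ref{lem:merge} and~\ref{lem:vertex-split} in three stages, first reducing all internal degrees to~$3$ and then choosing a root so as to produce the required binary structure.

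As a first step I would apply Lemma~\ref{lem:merge} exhaustively, contracting away every internal vertex of degree~$2$ in~$\calT$. After this preprocessing every internal vertex has degree at least~$3$ while leaves still have degree exactly~$1$. Next, to eliminate high-degree internal vertices, I would iteratively do the following for each internal vertex $v$ of degree $d \geq 4$: apply Lemma~\ref{lem:vertex-split} to replace $v$ with a path $(w_1, \dots, w_{d-2})$ whose new edges all carry correlation~$1$, and distribute the $d$ edges formerly incident to $v$ so that each of $w_1$ and $w_{d-2}$ inherits $2$ of them and each of $w_2, \dots, w_{d-3}$ inherits exactly one. A direct degree check then gives every $w_i$ degree exactly~$3$ (contributions $2{+}1$ at the endpoints and $1{+}2$ in the middle). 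Iterating across all high-degree vertices yields an equivalent tree in which every internal vertex has degree exactly~$3$ and every leaf has degree~$1$.

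To obtain the rooted binary structure I would then pick any internal vertex as the root~$r$; in the degenerate case that no internal vertex exists (i.e.\ $\calT$ is a single edge between two leaves), I would first apply Lemma~\ref{lem:split} to insert a middle vertex and take it as~$r$. At this point every non-root internal vertex automatically has one parent edge and two child edges, as required. The root~$r$ itself currently has three children (unless it was just introduced via Lemma~\ref{lem:split}, in which case it already has exactly two), so I would apply Lemma~\ref{lem:vertex-split} once more at~$r$: replace $r$ by a length-two path $(r', r'')$ with correlation~$1$ on the new edge, assigning one former child of $r$ to $r'$ and the remaining two to $r''$, and declare $r'$ the new root. Then $r'$ has exactly two children ($r''$ and its assigned original child) and $r''$ has exactly two children (the other two originals), yielding the desired rooted binary information flow tree.

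The argument is essentially bookkeeping on top of the previously-established transformation lemmas, so I do not anticipate any substantive obstacle; the only point requiring care is the treatment of the root (and the trivial two-leaf tree), which is handled by the final targeted application of Lemma~\ref{lem:vertex-split}.
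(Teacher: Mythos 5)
Your proposal is correct and follows essentially the same route as the paper: both arguments are bookkeeping with Lemmas~\ref{lem:merge}, \ref{lem:split}, and~\ref{lem:vertex-split}, differing only in the order of the steps (the paper splits each high-degree vertex into a path of $d$ vertices and merges away the resulting degree-$2$ vertices at the end, whereas you merge first and choose $m=d-2$ so every new vertex has degree exactly~$3$) and in how the root is created (the paper subdivides an arbitrary edge, while you split an internal vertex into two). These are cosmetic variations; no gap.
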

\begin{proof}
    Suppose $v$ is a vertex in $\calT$ of degree $d > 3$.  We apply Lemma~\ref{lem:vertex-split} to~$v$; by taking $m = d$ when doing so, we have room to attach each of $v$'s neighbors to a different~$w_i$.  As a result, each $w_i$ will have degree at most~$3$.  Repeating this for each vertex of degree exceeding~$3$, we get an equivalent tree~$\calT'$ of degree at most~$3$. By applying Lemma~\ref{lem:split} to an arbitrary edge~$e$, we can root the tree at newly created vertex.  Finally, any vertices of degree~$2$ (other than the root) that remain can be eliminated using Lemma~\ref{lem:merge}.
\end{proof}
Finally, we will use the last lemma to simplify general caterpillar trees.
\begin{definition}                                  \label{def:simple-caterpillar}
    We say a caterpillar is \emph{simple} if the spine has at least two vertices, and each vertex of the spine is attached to exactly one leaf.
\end{definition}
\begin{lemma}                                       \label{lem:simple-caterpillar}
    For each information flow caterpillar~$\calT$, there is an equivalent information flow \emph{simple} caterpillar~$\calT'$.
\end{lemma}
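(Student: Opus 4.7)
The plan is to obtain $\calT'$ from $\calT$ by chaining together the equivalences proved earlier in this section. First I identify a spine: removing the leaves of a caterpillar yields a (possibly empty) path, so I write the non-leaf vertices of $\calT$ as $s_1, s_2, \dots, s_k$ in their path order and let $d_i \geq 0$ denote the number of leaves attached to $s_i$. Since $|V| > 1$ forces at least $n \geq 2$ leaves in total, each spine endpoint satisfies $d_i \geq 1$ (else it would itself be a leaf), and if $k = 1$ then in fact $d_1 = n \geq 2$.

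Next I handle the degenerate case $k = 0$: here $\calT$ is a single edge $\ell_1 - \ell_2$, and a single application of Lemma~\ref{lem:split} inserts a new internal vertex $v$ to produce an equivalent caterpillar with $k = 1$ and $d_1 = 2$. So I may henceforth assume $k \geq 1$. I then proceed in two phases. \textbf{Phase 1 (clean up empty spine vertices):} for each internal spine vertex $s_i$ (with $2 \leq i \leq k-1$) such that $d_i = 0$, apply Lemma~\ref{lem:merge} to delete $s_i$ (which necessarily has degree~$2$) and collapse its two incident spine edges into a single edge whose correlation is the product of the two. After Phase~1, every surviving spine vertex has $d_i \geq 1$. \textbf{Phase 2 (split multi-leaf spine vertices):} for each surviving $s_i$ with $d_i \geq 2$, apply Lemma~\ref{lem:vertex-split} with $m = d_i$, attaching one of the $d_i$ leaves to each of the new vertices $w_1, \dots, w_{d_i}$ and reconnecting the (one or two) spine neighbors of $s_i$ to $w_1$ and $w_{d_i}$, respectively; the new edges $w_j - w_{j+1}$ all receive correlation~$1$.

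The resulting graph is again a tree whose non-leaf vertices form a path, i.e., a caterpillar, and by construction every spine vertex has exactly one leaf attached. The final spine has $\sum_i d_i \geq \max(k, n) \geq 2$ vertices, so it qualifies as simple in the sense of Definition~\ref{def:simple-caterpillar}; equivalence to the original $\calT$ follows by composing the equivalences furnished by Lemmas~\ref{lem:split}, \ref{lem:merge}, and \ref{lem:vertex-split}. I do not expect a genuine conceptual obstacle: each phase is a direct invocation of a previously established equivalence. The only thing that has to be gotten right is the bookkeeping, namely the order of operations ($k = 0$ fix first, then Phase~1, then Phase~2) and the verification that the small cases $k \in \{0, 1\}$ still produce a spine of length at least two — both of which are handled by the opening Lemma~\ref{lem:split} step and the observation that $d_1 \geq 2$ when $k = 1$.
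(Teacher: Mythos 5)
Your proof is correct and follows essentially the same route as the paper's: both chain the Section~\ref{sec:prelims} transformations, using Lemma~\ref{lem:vertex-split} to break multi-leaf spine vertices into a path of unit-leaf vertices, Lemma~\ref{lem:merge} to discard leafless spine vertices, and Lemma~\ref{lem:split} to handle the degenerate cases (the paper routes through Lemma~\ref{lem:degree-3-ize} but performs the same underlying operations). One harmless slip in your bookkeeping: the final spine has exactly $\sum_i d_i = n$ vertices, and $n$ can be smaller than the original $k$ (e.g.\ a long bare path with leaves only at its two ends), so the claimed bound $\sum_i d_i \geq \max(k,n)$ is wrong as stated --- but the conclusion only needs $n \geq 2$, which holds for every tree on more than one vertex.
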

\begin{proof}
    We apply Lemma~\ref{lem:degree-3-ize} to~$\calT$, taking care with one step:  When replacing a high-degree spine vertex~$v$, we insert the new path $(w_1, \dots, w_m)$ as part of the spine, with $v$'s spine neighbor(s) attached appropriately at $w_1$ or $w_m$.  Finally, the resulting binary tree will not quite be a simple caterpillar because the spine node furthest from the root will have two leaf children.  To fix this we can simply take either of these two leaf edges and split it using Lemma~\ref{lem:split}, creating one more spine vertex.
   \end{proof}

\begin{remark}
    It is also possible to convert any information flow tree~$\calT$ into an ``essentially'' equivalent one $\calT' = (V',E',\rho')$ which is \emph{homogeneous} --- meaning $\rho'$ is a constant --- and which has maximum degree~$3$.  Since we won't use this, we merely sketch the conversion.  Given~$\calT$, we can assume it has maximum degree~$3$ using the first part of Lemma~\ref{lem:degree-3-ize}.  Next, fix $\rho' = -(1-\delta)$ for some very small $\delta > 0$.  Now for each edge $e \in E$, replace it with a path of length $k \in \Z^+$ so that $(\rho')^k$ is as close as possible to~$\rho(e)$. Since we can make $\delta$ arbitrarily small, we can get all approximations $(\rho')^k \approx \rho(e)$ simultaneously as close as desired, yielding an ``essentially'' equivalent tree~$\calT'$.  Note that we can't quite ensure all vertices of $\calT'$ have exactly two children: merging the degree-$2$ vertices of~$\calT'$ would spoil its homogeneity property.
\end{remark}

\section{A formula for conditional covariance on general trees}         \label{sec:exp-cov-formula}

Suppose $\bX_{v_1}, \bX_{v_m}$ are two vertex random variables in an information flow tree.  Prior to any conditioning, it's easy to see that their covariance is equal to the product of $\rho(e)$ along the edges~$e$ joining~$v_1$ and~$v_m$.  In this section we generalize this to a formula for their \emph{expected} covariance when conditioned on any event that is comprised of several conditionally independent events.

\newcommand{\Lsub}{L}
\newcommand{\Xbar}{\overline{\bX}}
\newcommand{\bits}{\{\pm 1\}}
\begin{theorem}                                     \label{thm:covariance-induction}
    Let $\calT = (V,E,\rho)$ be a information flow tree, with associated vertex random variables $(\bX_v)_{v \in V}$. Fix any path~$P = (v_1, \dots, v_m)$ of vertices in~$\calT$, where $m \in \Z^+$. We think of~$P$ as partitioning~$\calT$ into a sequence of subtrees~$\calT_i$, with $\calT_i$ rooted at~$v_i$.  For notational simplicity we write
    \[
        \bX_{i} = \bX_{v_i}, \qquad \Xbar = (\bX_1, \dots, \bX_m), \qquad \rho_i = \rho(v_{i},v_{i+1}).
    \]
    Let~$\Lsub_i$ be any event depending only on the random variable outcomes in~$\calT_i$. Write $\Lsub = \Lsub_1 \wedge \Lsub_2 \wedge \dots \wedge \Lsub_m$, and assume~$\Lsub$ has nonzero probability.  Then
    \begin{align}
        \Cov[\bX_{1}, \bX_{m} \mid \Lsub]
		&= \frac{\prod_{i = 1}^{m-1}\rho_i \prod_{i = 1}^m\Pr[L_i \mid \bX_i = +1]\Pr[L_i \mid \bX_i = -1]}{\Pr[L]^2}  \label{eqn:cov-ind1}
    \end{align}
\end{theorem}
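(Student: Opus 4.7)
The plan is to marginalize out the subtree variables via conditional independence, then express the path-level quantities as a transfer-matrix product, and finally recognize the covariance numerator as a $2\times 2$ determinant. First, since each $L_i$ is supported on $\calT_i$, and the $\calT_i$'s meet one another only through the spine path $P$, the Markov property of the tree implies that conditional on $\Xbar = (\bX_1,\dots,\bX_m)$ the random variables inside the various $\calT_i$'s become mutually independent, with $L_i$ depending on $\Xbar$ only through $\bX_i$. This gives $\Pr[L \mid \Xbar] = \prod_i q_i^{\bX_i}$, where I abbreviate $q_i^\pm := \Pr[L_i \mid \bX_i = \pm 1]$, and reduces the entire problem to a calculation on the Markov chain $\bX_1,\dots,\bX_m$ (uniform start, transition $K_i[x,y] := (1+\rho_i xy)/2$ along the edge $(v_i,v_{i+1})$) weighted by the vertex potentials $q_i^{\bX_i}$.

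Next, I would write the unnormalized joint table $M[a,c] := \Pr[L \wedge \bX_1 = a \wedge \bX_m = c]$ as a matrix product, by marginalizing $\bX_2,\dots,\bX_{m-1}$ out of $\Pr[\Xbar = x]\prod_i q_i^{x_i}$. This yields the standard transfer-matrix representation
\[
    M \;=\; \tfrac{1}{2}\, U_1 K_1 U_2 K_2 \cdots K_{m-1} U_m, \qquad U_i := \mathrm{diag}(q_i^+,\, q_i^-).
\]

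The key algebraic observation is the $2\times 2$ bilinear identity
\[
    \Bigl(\sum_{a,c} M[a,c]\Bigr)\Bigl(\sum_{a,c} ac\, M[a,c]\Bigr) - \Bigl(\sum_{a,c} a\, M[a,c]\Bigr)\Bigl(\sum_{a,c} c\, M[a,c]\Bigr) \;=\; 4\det M,
\]
whose left-hand side is precisely $\Pr[L]\,\E[\bX_1\bX_m\,\mathbf{1}_L] - \E[\bX_1\mathbf{1}_L]\,\E[\bX_m\mathbf{1}_L]$, i.e.\ the numerator of $\Cov[\bX_1,\bX_m\mid L]$ after multiplying through by $\Pr[L]^2$. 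Using multiplicativity of the determinant (with $\det(\tfrac12 A) = \tfrac14 \det A$ for $2\times 2$ matrices), the factorization of $M$ gives $\det M = \tfrac{1}{4}\prod_{j=1}^m q_j^+ q_j^- \prod_{i=1}^{m-1}\rho_i$, since $\det U_j = q_j^+ q_j^-$ is immediate and $\det K_i = \rho_i$ follows from $K_i$ having eigenvalues $1$ and~$\rho_i$. Substituting into $\Cov[\bX_1,\bX_m \mid L] = 4\det(M)/\Pr[L]^2$ produces exactly the stated formula.

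There is no serious conceptual obstacle once one spots the bilinear-to-determinant identity; the only step that requires genuine care is the conditional-independence reduction, which is a bookkeeping exercise about how the subtrees $\calT_i$ partition the tree around the spine so that, after conditioning on $\Xbar$, the $L_i$'s factor as independent Bernoulli events depending only on the individual $\bX_i$'s.
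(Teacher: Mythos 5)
Your argument is correct, and it reaches the formula by a genuinely different route than the paper. The paper proves the equivalent identity $\Pr[L]^2\,\Cov[\bX_1,\bX_m\mid L]=\prod_{i=1}^{m-1}\rho_i\prod_{i=1}^m\lambda_i^+\lambda_i^-$ by induction on $m$: it writes the covariance via an independent copy as $\tfrac12\E[(\bX_1-\bX_1')(\bX_m-\bX_m')\mid L]$ and shows that appending one spine vertex multiplies the unnormalized sum by $\rho_m\lambda_{m+1}^+\lambda_{m+1}^-$. You instead encode the whole path at once as the transfer-matrix product $M=\tfrac12 U_1K_1U_2\cdots K_{m-1}U_m$ and identify $\Pr[L]^2\,\Cov[\bX_1,\bX_m\mid L]$ with $4\det M$, so that multiplicativity of the determinant, together with $\det K_i=\rho_i$ and $\det U_i=q_i^+q_i^-$, finishes the proof in one step. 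The pieces all check out: your bilinear expression equals $\Pr[L]\,\E[\bX_1\bX_m\,\bone_L]-\E[\bX_1\bone_L]\,\E[\bX_m\bone_L]$, which for a $2\times 2$ table with entries $p,q,r,s$ is $(p+s)^2-(q+r)^2-(p-s)^2+(q-r)^2=4(ps-qr)=4\det M$ (and is also precisely the paper's independent-copy expression, so the two proofs compute the same object); the factorization $\Pr[L\mid\overline{\bX}]=\prod_i\Pr[L_i\mid\bX_i]$ holds because $v_i$ separates $\calT_i$ from the rest of the tree, which is the same Markov-property step the paper uses in its inductive factorization; and $\det K_i=\rho_i$ and $\det(\tfrac12 A)=\tfrac14\det A$ are immediate. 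What your route buys is that the induction is absorbed into $\det(AB)=\det A\det B$, and the nonnegativity of the conditional covariance when the $\rho_i$ are nonnegative (the fact exploited in Section~\ref{sec:monotonicity}) becomes visible as the positivity of a determinant whose factors all have nonnegative determinant; what the paper's induction buys is a self-contained elementary-probability computation with no matrix formalism.
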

\begin{proof}
 Introducing the notation
  \begin{equation}                                \label{eqn:lambda-def}
        \lambda_i^{\pm} = \Pr[\Lsub_i \mid \bX_i = \pm 1],
    \end{equation}
we want to show that
    \begin{equation}                                \label{eqn:cov-ind}
        \Pr[\Lsub]^2 \Cov[\bX_1, \bX_m \mid \Lsub] = \prod_{i = 1}^{m-1} \rho_i \prod_{i = 1}^m \lambda_i^+ \lambda_i^-.
    \end{equation}

    Recall that if $(\bY, \bZ)$ is a pair of random variables, $\Cov[\bY,\bZ] = \frac12 \E[(\bY - \bY')(\bZ - \bZ')]$, where $(\bY',\bZ')$ denotes an independent copy of $(\bY,\bZ)$. Substituting this into~\eqref{eqn:cov-ind}, the identity we want to prove becomes
    \begin{align}
        \prod_{i = 1}^{m-1} \rho_{i} \prod_{i = 1}^m \lambda_i^+ \lambda_i^-
      & = \Pr[\Lsub]^2 \cdot \tfrac12 \E[(\bX_1 - \bX_1')(\bX_m- \bX_m') \mid \Lsub] \nonumber\\
      & = \Pr[\Lsub]^2 \cdot \sum_{x, x' \in \bits^m} \Pr[\Xbar = x \mid \Lsub] \Pr[\Xbar' = x' \mid \Lsub] \cdot \tfrac12(x_1-x_1')(x_m-x_m')
                                                                             \nonumber\\
      & = \sum_{x, x'  \in \bits^m} \Pr[\Xbar = x , \Lsub] \Pr[\Xbar' = x' , \Lsub] \cdot \tfrac12(x_1-x_1')(x_m-x_m').
                                                                             \label{eqn:induct-me}
    \end{align}
    We will prove~\eqref{eqn:induct-me} by induction on~$m$.  The base case, $m = 1$, is
    \begin{equation}                             \label{eqn:cov-base}
        \lambda_1^+ \lambda_1^- = \sum_{x,x' \in \bits} \Pr[\bX_1 = x , \Lsub_1] \Pr[\bX_1' = x' , \Lsub_1] \cdot \tfrac12(x - x')^2.
    \end{equation}
    To verify this, note that when $x = x'$ the summand in~\eqref{eqn:cov-base} is zero and when $x \neq x'$ the summand in~\eqref{eqn:cov-base} is ${2\Pr[\bX_1 = +1 , \Lsub_1] \Pr[\bX_1 = -1 , \Lsub_1]}$.  Thus the whole sum in~\eqref{eqn:cov-base} is indeed
    \begin{multline*}
        4\Pr[\bX_1 = +1 , \Lsub_1] \Pr[\bX_1 = -1 , \Lsub_1] \\ = 4 (\Pr[\Lsub_1 \mid \bX_1 = +1] \Pr[\bX_1 = +1] )(\Pr[\Lsub_1 \mid \bX_1 = -1] \Pr[\bX_1 = -1]) = 4 \lambda_1^+ \cdot \tfrac12 \cdot \lambda_1^- \cdot \tfrac12 = \lambda_1^+ \lambda_1^-.
    \end{multline*}
    We now assume~\eqref{eqn:induct-me} holds for a given~$m \in \Z^+$ and prove it for $m+1$.  Thus we need to show
    \begin{equation}                                   \label{eqn:induction-business}
    \begin{aligned}
        \prod_{i = 1}^{m} \rho_{i} \prod_{i = 1}^{m+1} \lambda_i^+ \lambda_i^-
        = \sum_{\substack{x\phantom{'} \in \bits^m\\ x' \in \bits^m}} \sum_{\substack{x_{m+1} \in \bits \\ x'_{m+1} \in \bits}}
        &\Pr[\Xbar = x,\phantom{'} \bX_{m+1} = x_{m+1}, \Lsub, \Lsub_{m+1}] \cdot {}\\
        &\Pr[\Xbar' = x', \bX'_{m+1} = x'_{m+1}, \Lsub, \Lsub_{m+1}] \cdot \tfrac12(x_1-x_1')(x_{m+1}-x_{m+1}'),
    \end{aligned}
    \end{equation}
    Because of the information flow tree structure we have
    \begin{align*}
        \Pr[\Xbar = x, \bX_{m+1} = x_{m+1}, \Lsub, \Lsub_{m+1}] &= \Pr[\Xbar = x, \Lsub] \Pr[\bX_{m+1} = x_{m+1}, \Lsub_{m+1} \mid \bX_m = x_m] \\
        &= \Pr[\Xbar = x, \Lsub] \cdot (\tfrac12 + \tfrac12 \rho_m x_m x_{m+1}) \cdot \lambda_{m+1}^{x_{m+1}},
    \end{align*}
    and similarly for $x', x'_{m+1}$.  Thus the right-hand side of~\eqref{eqn:induction-business} is
    \begin{equation}                                \label{eqn:inner-mess}
    \begin{aligned}
      \sum_{\substack{x\phantom{'} \in \bits^m \\ x' \in \bits^m}} \Bigl(& \Pr[\Xbar = x, \Lsub]  \Pr[\Xbar' = x', \Lsub]
                                                                        \cdot \tfrac12(x_1 - x_1')  \\
              & \quad {} \cdot \sum_{\substack{x_{m+1} \in \bits \\ x'_{m+1} \in \bits}}
                  (\tfrac12 + \tfrac12 \rho_m x_m x_{m+1})   \cdot \lambda_{m+1}^{x_{m+1}}
            \cdot (\tfrac12 + \tfrac12 \rho_m x_m' x_{m+1}') \cdot \lambda_{m+1}^{x_{m+1}'}
                                                                        \cdot(x_{m+1} - x_{m+1}')\Bigr).
    \end{aligned}
    \end{equation}
    Regarding the inner sum here (i.e., the second line in~\eqref{eqn:inner-mess}), there is no contribution when $x_{m+1} = x_{m+1}'$; by a little algebra, the contribution from the two $x_{m+1} \neq x_{m+1}'$ summands is
    \[
        \tfrac12\lambda_{m+1}^+\lambda_{m+1}^-
            \left( (1+\rho_m x_m)(1-\rho_m x_{m+1}') - (1-\rho_m x_m)(1+\rho_m x_{m+1}')  \right) = \lambda_{m+1}^+\lambda_{m+1}^- \rho_m (x_m - x'_m).
    \]
    Thus~\eqref{eqn:inner-mess} (equivalently, the right-hand side of~\eqref{eqn:induction-business}) is
    \[
    \rho_{m+1} \lambda_{m+1}^+\lambda_{m+1}^-
        \sum_{x,x' \in \bits^m} \Pr[\Xbar = x, \Lsub] \Pr[\Xbar' = x', \Lsub] \cdot \tfrac12(x_1 - x_1')(x_{m} - x_{m}')
        = \prod_{i = 1}^{m} \rho_{i} \prod_{i = 1}^{m+1} \lambda_i^+ \lambda_i^-,
    \]
    by the induction hypothesis~\eqref{eqn:induct-me}.
    \end{proof}
    We present an equivalent way to write~\eqref{eqn:cov-ind1} in the following corollary.
\begin{corollary}
Suppose we are in the setting of Theorem~\ref{thm:covariance-induction}.  Then for any $x \in \{\pm 1 \}^m$ and its bitwise negation $-x$,
 \begin{align*}
        \Cov[\bX_{1}, \bX_{m} \mid \Lsub]
            &= \prod_{i = 1}^{m-1} \rho_i \cdot \frac{\Pr[\Xbar = x \mid \Lsub]}{\Pr[\Xbar = x]} \cdot \frac{\Pr[\Xbar = -x \mid \Lsub]}{\Pr[\Xbar = -x]}.
 \end{align*}
\end{corollary}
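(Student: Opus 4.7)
The corollary is essentially a cosmetic rewriting of identity~\eqref{eqn:cov-ind1}, so my plan is just to exhibit the ratio on the right-hand side as the product $\prod_i \lambda_i^+ \lambda_i^- / \Pr[L]^2$ in disguise. The only substantive ingredient is the tree's conditional independence, and I would use it exactly once.

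\textbf{Step 1: Factor $\Pr[\bar{\bX} = x, L]$ using the tree structure.} Because each $L_i$ depends only on random variables lying in the subtree $\calT_i$ rooted at $v_i$, and because conditioned on the values $\bar{\bX} = x$ of the spine vertices $v_1, \dots, v_m$ the subtrees $\calT_1, \dots, \calT_m$ are mutually independent, I get
\[
\Pr[L \mid \bar{\bX} = x] = \prod_{i=1}^m \Pr[L_i \mid \bar{\bX} = x] = \prod_{i=1}^m \Pr[L_i \mid \bX_i = x_i] = \prod_{i=1}^m \lambda_i^{x_i},
\]
and therefore $\Pr[\bar{\bX} = x, L] = \Pr[\bar{\bX} = x] \prod_{i=1}^m \lambda_i^{x_i}$.

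\textbf{Step 2: Apply Bayes' rule twice.} Dividing by $\Pr[\bar{\bX} = x] \Pr[L]$ yields
\[
\frac{\Pr[\bar{\bX} = x \mid L]}{\Pr[\bar{\bX} = x]} = \frac{\prod_{i=1}^m \lambda_i^{x_i}}{\Pr[L]},
\]
and the same identity with $x$ replaced by $-x$ gives $\Pr[\bar{\bX} = -x \mid L] / \Pr[\bar{\bX} = -x] = \prod_{i=1}^m \lambda_i^{-x_i} / \Pr[L]$.

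\textbf{Step 3: Multiply and invoke Theorem~\ref{thm:covariance-induction}.} Taking the product of the two displays in Step~2, and noting that for every $i$ we have $\{x_i, -x_i\} = \{+1, -1\}$ so $\lambda_i^{x_i} \lambda_i^{-x_i} = \lambda_i^+ \lambda_i^-$, I obtain
\[
\frac{\Pr[\bar{\bX} = x \mid L]}{\Pr[\bar{\bX} = x]} \cdot \frac{\Pr[\bar{\bX} = -x \mid L]}{\Pr[\bar{\bX} = -x]} = \frac{\prod_{i=1}^m \lambda_i^+ \lambda_i^-}{\Pr[L]^2}.
\]
Multiplying both sides by $\prod_{i=1}^{m-1} \rho_i$ and applying the formula~\eqref{eqn:cov-ind1} completes the proof.

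There is no real obstacle: the only subtle point is justifying the conditional-independence factorization in Step~1, which rests on the standard Markov property of information flow trees (the subtrees $\calT_i$ communicate with one another only through the spine values $\bX_1, \dots, \bX_m$). Everything else is algebra.
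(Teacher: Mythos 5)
Your proposal is correct and is essentially the paper's own proof run in reverse: both arguments rest on the same three ingredients, namely the regrouping $\lambda_i^{x_i}\lambda_i^{-x_i} = \lambda_i^+\lambda_i^-$, the Markov/conditional-independence property giving $\Pr[L \mid \Xbar = x] = \prod_i \Pr[L_i \mid \bX_i = x_i]$, and Bayes' theorem. No gaps.
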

\begin{proof}
Beginning with~\eqref{eqn:cov-ind1}, we have
\begin{align*}
\Cov[\bX_{1}, \bX_{m} \mid \Lsub]
		&= \prod_{i = 1}^{m-1}\rho_i \cdot \frac{\prod_{i = 1}^m\Pr[L_i \mid \bX_i = +1]}{\Pr[L]}\cdot\frac{\prod_{i = 1}^m\Pr[L_i \mid \bX_i = -1]}{\Pr[L]}\\
		&= \prod_{i = 1}^{m-1}\rho_i \cdot \frac{\prod_{i = 1}^m\Pr[L_i \mid \bX_i = x_i]}{\Pr[L]}\cdot\frac{\prod_{i = 1}^m\Pr[L_i \mid \bX_i = -x_i]}{\Pr[L]}.
\end{align*}
By virtue of the information flow tree structure we have $\Pr[L_i \mid \Xbar = x] = \Pr[L_i \mid \bX_i = x_i]$.  Thus the above equals
\begin{align*}
		\prod_{i = 1}^{m-1}\rho_i \cdot \frac{\prod_{i = 1}^m\Pr[L_i \mid \Xbar = x]}{\Pr[L]}\cdot\frac{\prod_{i = 1}^m\Pr[L_i \mid \Xbar = -x]}{\Pr[L]}
		= \prod_{i = 1}^{m-1}\rho_i \cdot \frac{\Pr[L \mid \bX = x]}{\Pr[L]}\cdot \frac{\Pr[L \mid \bX = -x]}{\Pr[L]}.
\end{align*}
The proof is now completed by applying Bayes' theorem.
\end{proof}

\section{A monotonicity property}                           \label{sec:monotonicity}

Though the formula in Theorem~\ref{thm:covariance-induction} is quite precise, we will only use it in a rather ``soft'' way, to show a certain monotonicity property.  Suppose we are in the setting of Theorem~\ref{thm:covariance-induction} and that~$\Lsub$ denotes a certain outcome for all the leaf random variables in the tree (these are the events of main interest for us).  Assume for simplicity that the ``path correlations'' $\rho_1, \dots, \rho_{m-1}$ are all nonnegative.  Then the formula from Theorem~\ref{thm:covariance-induction} implies that $\Cov[\bX_1, \bX_m \mid \Lsub]$ is also nonnegative, a fact that does not seem obvious a~priori.  We'll in fact be interested in the \emph{expected value} of this conditional covariance, over all the outcomes~$\Lsub$.

The goal of this section is to show that this expected covariance can only increase if one of the ``path correlations''~$\rho_i$ is increased.  Though this fact seems ``intuitive'', it's also not a~priori obvious; we've only been able to prove it with the aid of Theorem~\ref{thm:covariance-induction}.  Note that this monotonicity property is not immediately obvious from the formula in Theorem~\ref{thm:covariance-induction}, since the expressions $\Pr[\Xbar = (\pm 1, \dots, \pm 1) \mid \Lsub]$ have an implicit dependence on the~$\rho_i$'s.

\begin{theorem}                     \label{thm:monotonic}
    In the setting of Theorem~\ref{thm:covariance-induction}, assume that $\rho_1, \dots, \rho_{m-1} \in [0,1]$.  Let $\overline{\bY} = (\bY_1, \dots, \bY_\ell)$ be the leaf random variables of~$\calT$.  Then
    \begin{equation}                                    \label{eqn:should-increase}
        \E\Brak{\Abs{\Cov[\bX_1,\bX_m]} \mid \bY}
    \end{equation}
    is a nondecreasing function of each~$\rho_i$.
\end{theorem}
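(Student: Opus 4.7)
The plan is to fix an index $k \in \{1, \ldots, m-1\}$ and show that $f(\rho_k) := \E\Brak{\Cov[\bX_1, \bX_m] \mid \bY}$ --- viewed as a function of $\rho_k$ alone, with all other correlations and the subtree structure held fixed --- is nondecreasing on $[0, 1]$. Since $\rho_i \geq 0$ by hypothesis, Theorem~\ref{thm:covariance-induction} together with the nonnegativity of each $\lambda_i^\pm$ shows that the conditional covariance is nonnegative for every leaf outcome, so the absolute value can be dropped. Using the theorem and the identity $\E\Brak{\Cov[\bX_1,\bX_m]\mid\bY} = \sum_y \Pr[\bY = y]\cdot \Cov[\bX_1,\bX_m \mid \bY = y]$, one obtains
\[
f(\rho_k) \;=\; \sum_y \frac{C(y)\,\rho_k}{\Pr[\bY = y]}, \qquad C(y) \;:=\; \Bigl(\prod_{i \neq k} \rho_i\Bigr)\Bigl(\prod_{i=1}^m \lambda_i^+(y_i)\,\lambda_i^-(y_i)\Bigr).
\]
The essential observation is that each $\lambda_i^\pm = \Pr[\Lsub_i \mid \bX_i = \pm 1]$ is determined entirely by the subtree~$\calT_i$ and the outcome~$y_i$ of its leaves, hence is \emph{independent} of every path correlation. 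Thus $C(y)$ is independent of $\rho_k$ and is manifestly nonnegative.

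The second ingredient is that $\Pr[\bY = y]$ is \emph{affine} in $\rho_k$. Indeed,
\[
\Pr[\bY = y] = \sum_{x \in \bits^m} \Pr[\bX = x] \prod_i \lambda_i^{x_i}(y_i), \qquad \Pr[\bX = x] = \tfrac12 \prod_{i=1}^{m-1} \tfrac{1 + \rho_i x_i x_{i+1}}{2},
\]
and the latter is multilinear in $\rho_1, \ldots, \rho_{m-1}$. So write $\Pr[\bY = y] = \alpha(y) + \rho_k\beta(y)$ with $\alpha, \beta$ independent of $\rho_k$. The key sign fact is that $\alpha(y) = \Pr[\bY = y]\bigl|_{\rho_k = 0}$ is itself a probability in the valid information flow tree obtained by setting $\rho_k = 0$, so $\alpha(y) \geq 0$.

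With these two ingredients in hand, a one-line derivative computation finishes the proof: for each $y$ in the support of $\bY$,
\[
\frac{d}{d\rho_k}\frac{C(y)\,\rho_k}{\alpha(y) + \rho_k\,\beta(y)} \;=\; \frac{C(y)\,\alpha(y)}{\Pr[\bY = y]^2} \;\geq\; 0,
\]
so $df/d\rho_k \geq 0$ on the interior of $[0,1]$ and $f$ is nondecreasing there by continuity (with a standard limiting argument at the endpoints to handle any $y$ whose $\Pr[\bY=y]$ vanishes there). I do not foresee any substantive obstacle: the whole argument rests on the explicit $\rho_k$-factorization of the numerator supplied by Theorem~\ref{thm:covariance-induction}, on the affine structure of $\Pr[\bY = y]$, and on the probabilistic interpretation of the constant term $\alpha(y)$ as an honest probability. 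Once these three facts are noted, the ``implicit dependence'' of the formula on the~$\rho_i$'s flagged in the paragraph preceding the theorem becomes completely transparent.
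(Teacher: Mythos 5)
Your proposal is correct and follows essentially the same route as the paper: drop the absolute value using the nonnegativity of the formula from Theorem~\ref{thm:covariance-induction}, observe that the $\lambda_i^{\pm}$ and the off-path $\rho_i$'s are constants in $\rho_k$, note that $\Pr[\overline{\bY}=y]$ is affine in $\rho_k$ with nonnegative constant term, and conclude monotonicity of $\rho_k/(\alpha+\beta\rho_k)$ termwise. The only cosmetic difference is that the paper disposes of the degenerate case $\Pr[\overline{\bY}=y]=0$ up front (showing it forces the conditional covariance to vanish identically) rather than by a limiting argument, but both treatments are sound.
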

\begin{proof}
    Let $y \in \bits^\ell$ be any potential outcome for the leaf random variables, and let $\Lsub_y$ denote the event that $\overline{\bY} = y$.  Then it's easy to see that~$\Lsub_y$ has the factorizable form described in Theorem~\ref{thm:covariance-induction}.  (A slight annoyance is that it's possible to have $\Pr[\Lsub_y] = 0$.  However this can only happen if some pair $\bY_i, \bY_j$ is fully correlated; i.e., $\Cov[\bY_i \bY_j] = \pm 1$.  In this case, letting~$\bX_i$ denote one of the ancestors of $\bY_i, \bY_j$ on path~$P$, we have that~$\bX_i$ is fully determined by \emph{every} possible outcome~$y$. In turn, this means~$\bX_1$ and~$\bX_m$ are \emph{independent} conditioned on every possible outcome~$y$; i.e., the random variable $\Cov[\bX_1,\bX_m \mid \bY]$ is identically~$0$.  Then~\eqref{eqn:should-increase} is trivially a nondecreasing function of the~$\rho_i$'s.  Thus we may henceforth assume that no pair~$\bY_i, \bY_j$ is fully correlated and hence that $\Pr[\Lsub_y] \neq 0$ for all $y \in \bits^\ell$, no matter what the~$\rho_i$'s are.)

    Rewriting identity~\eqref{eqn:cov-ind}, Theorem~\ref{thm:covariance-induction} equivalently states that for any $y \in \bits^\ell$,
    \begin{equation}                                    \label{eqn:the-nonneg-summands}
        \Pr[\Lsub_y] \cdot \Abs{\Cov[\bX_1, \bX_m \mid \Lsub_y]} = \frac{\prod_{i=1}^{m-1} \rho_i \prod_{i=1}^m \lambda_i^+ \lambda_i^-}{\Pr[\Lsub_y]}.
    \end{equation}
    (We are able to insert the absolute-value sign on the left because, as noted earlier, the right-hand side is evidently nonnegative.) By definition, our quantity of interest~\eqref{eqn:should-increase} is the sum of~\eqref{eqn:the-nonneg-summands} over all $y \in \bits^\ell$.  We'll in fact show that for \emph{every} $y \in \bits^\ell$ and every $j \in [m-1]$, the quantity~\eqref{eqn:the-nonneg-summands} is a nondecreasing function of~$\rho_j$.

    In the numerator of~\eqref{eqn:the-nonneg-summands} we have that $\prod_{i \neq j} \rho_i$ is a nonnegative constant independent of~$\rho_j$.  The same is true of $\prod_{i=1}^m \lambda_i^+ \lambda_i^-$: by the definition~\eqref{eqn:lambda-def}, each~$\lambda_i^{\pm 1}$ represents a probability that depends on~$y$ but not on any of the~$\rho_i$'s.  Thus it remains to show that
    \begin{equation}                                    \label{eqn:I-increase}
        \frac{\rho_{j}}{\Pr[\Lsub_y]} = \frac{\rho_j}{\Pr[\overline{\bY} = y]}
    \end{equation}
    is a nondecreasing function of~$\rho_j$. Note that $\Pr[\overline{\bY} = y]$ implicitly depends on all of the~$\rho_i$'s; in fact, it's a \emph{linear} function of each of them.  To see this, note that $\rho_j = \rho(v_j,v_{j+1})$ enters into the generation of $\calT$'s random variables only through the edge random variable~$\bR_{v_j,v_{j+1}}$; thus we can write
    \[
        \Pr[\overline{\bY} = y] = (\half + \half \rho_j) \Pr[\overline{\bY} = y \mid \bR_{v_j,v_{j+1}} = +1] + (\half - \half \rho_j) \Pr[\overline{\bY} = y \mid \bR_{v_j,v_{j+1}} = -1],
    \]
    where the two conditional probabilities on the right do not depend on~$\rho_j$.  Thus we can express~\eqref{eqn:I-increase} as
    \begin{equation}                                    \label{eqn:finish-monotonicity}
        \frac{\rho_j}{\Pr[\overline{\bY} = y]} = \frac{\rho_j}{b + c\rho_j}
    \end{equation}
    for some numbers $b, c$ not depending on~$\rho_j$.  Now a function of this form, $\frac{\rho_j}{b+c\rho_j}$, is nondecreasing if and only if~$b \geq 0$; i.e., if and only if the denominator in~\eqref{eqn:finish-monotonicity} is nonnegative for $\rho_j = 0$.  But indeed this quantity is nonnegative, being a probability.
\end{proof}

We end this section by observing that although we have shown that~\eqref{eqn:should-increase} is an increasing function of the ``path correlations''~$\rho_i$, we actually expect it to be a \emph{decreasing} function of $|\rho(e)|$ for all edges~$e$ \emph{not} on the path between~$\bX_{v_1}$ and~$\bx_{v_m}$.  The intuition is that increasing one such~$|\rho(e)|$ gives more information about its ancestor random variable $\bX_{v_i}$ on the path~$P$.  In turn, this should decrease the expected covariance between $\bX_{v_1}$ and $\bX_{v_m}$.  As an example, if~$v_i$ had just a single edge $(v_i,\ell)$ hanging off it, and $\rho(v_i,\ell)$ were increased to~$1$, then observing the leaf random variable~$\bX_{\ell}$ would determine $\bX_{v_i}$ completely. Thus $\bX_{v_1}$ and $\bX_{v_m}$ would become independent (covariance-$0$) conditioned on any observed outcome for $\bX_\ell$.

\section{The inhomogeneous star}          \label{sec:star}

To motivate the result in this section, let's recall Conjecture~B.  Suppose we are given any information flow tree~$\calT$ and we would like to upper-bound the expected covariance of some \emph{particular} pair of leaves $\bY_u, \bY_v$.  As we'll see, it's easy to reduce this to analyzing the expected covariance of the leaves' parents, call them $\bX_{v_1}, \bX_{v_m}$.  Next, our monotonicity result Theorem~\ref{thm:monotonic} implies that this expected covariance can only increase if all edge-correlations along the path between $\bX_{v_1}, \bX_{v_m}$ were increased to~$1$.  In this case, by Lemma~\ref{lem:glue} we could equivalently think of the entire path as being contracted into one internal random variable~$\bX_0$.

Suppose now that the original tree was a caterpillar---in fact, by Lemma~\ref{lem:simple-caterpillar} we can assume it was a simple caterpillar.  After contracting the path, the collection~$\calL$ of leaves that were originally ``between''~$\bY_u$ and~$\bY_v$ now hang directly off of~$\bX_0$.  The two parts of the caterpillar ``to the outside'' of $\bX_{v_1}$ and~$\bX_{v_m}$ also hang off of~$\bX_0$ as gangly caterpillar-subtrees --- but we plan on ignoring them.  We only intend to analyze the ``inhomogeneous star'' formed by~$\bX_0$ and~$\calL$.  The hope will be that if there is ``squared correlation'' along the edges to~$\calL$, then conditioning on them will typically leave~$\bX_0$ with very small variance; equivalently,~$\bX_{v_1}$ and~$\bX_{v_m}$ will have very small covariance.

The following lemma concerning the inhomogeneous star uses well-known ideas, but as we do not have a reference for the exact statement, we give a proof.
\begin{lemma}                                       \label{lem:inhom-star}
    Let $\calT$ be an information flow tree comprising a ``star center'' vertex with random variable~$\bX_0$, as well as~$m$ leaf vertices with random variables denoted $\bY_1, \dots, \bY_m$.  We allow $\calT$ to contain additional vertices not mentioned.  Write~$\rho_i \in [-1,1]$ for the correlation between~$\bX_0$ and~$\bY_i$, and write $\alpha = \sum_{i=1}^m \rho_i^2$.  Then
    \[
        \E\BRAK{\Var[\bX_0] \mid \overline{\bY}} \leq 4\exp(-\alpha /2),
    \]
    where $\overline{\bY}$ denotes $(\bY_1, \dots, \bY_m)$.
\end{lemma}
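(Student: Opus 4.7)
The plan is to bound $\E[\Var[\bX_0 \mid \overline{\bY}]]$ by the Bhattacharyya coefficient between the two conditional laws of $\overline{\bY}$ given $\bX_0 = \pm 1$, and then exploit the tree's branching structure to factorize this coefficient over~$i$.

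First, since $\bX_0 \in \{\pm 1\}$, writing $p(y) \coloneqq \Pr[\bX_0 = +1 \mid \overline{\bY} = y]$, we have $\Var[\bX_0 \mid \overline{\bY} = y] = 4p(y)(1-p(y))$. Averaging over~$y$ and applying Bayes' rule, using $\Pr[\bX_0 = \pm 1] = \tfrac12$, I would rewrite
\[
    \E[\Var[\bX_0 \mid \overline{\bY}]] = \sum_y \frac{\Pr[\overline{\bY} = y \mid \bX_0 = +1]\,\Pr[\overline{\bY} = y \mid \bX_0 = -1]}{\Pr[\overline{\bY} = y]}.
\]

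Second, expanding $\Pr[\overline{\bY} = y] = \tfrac12 \Pr[\overline{\bY} = y \mid \bX_0 = +1] + \tfrac12 \Pr[\overline{\bY} = y \mid \bX_0 = -1]$ and applying AM--GM to the denominator, the right-hand side is at most
\[
    \sum_y \sqrt{\Pr[\overline{\bY} = y \mid \bX_0 = +1]\,\Pr[\overline{\bY} = y \mid \bX_0 = -1]},
\]
i.e., the Bhattacharyya coefficient of the two conditional laws. Then, because the $\bY_i$ lie in distinct subtrees of $\calT$ rooted at~$\bX_0$, they are conditionally independent given~$\bX_0$, so this coefficient factorizes over~$i$. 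Using $\Pr[\bY_i = y_i \mid \bX_0 = x_0] = \tfrac12(1 + \rho_i x_0 y_i)$, each factor evaluates to $\sqrt{1-\rho_i^2}$, and the elementary bound $1 - \rho_i^2 \leq e^{-\rho_i^2}$ gives $\prod_i \sqrt{1-\rho_i^2} \leq \exp(-\alpha/2)$.

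In fact this argument yields $\E[\Var[\bX_0 \mid \overline{\bY}]] \leq \exp(-\alpha/2)$, so the factor of~$4$ in the lemma statement is slack (and presumably present to accommodate a slightly different proof route). I do not anticipate a substantive obstacle: the only subtlety is verifying the conditional independence of the $\bY_i$ given $\bX_0$, which is immediate from the Markov property of information flow trees once one notes that each $\bY_i$ sits in its own subtree hanging off~$\bX_0$, and that the unnamed ``additional vertices'' of~$\calT$ are irrelevant to the joint law of $(\bX_0, \overline{\bY})$.
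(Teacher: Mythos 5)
Your proof is correct, and it takes a genuinely different route from the paper's. The paper bounds $\Var[\bX_0 \mid \overline{\bY}=y]$ by $4\Pr[\bX_0 \neq \bS \mid \overline{\bY}=y]$ for the maximum-likelihood-type estimator $\bS = \sgn(\rho_1\bY_1 + \cdots + \rho_m \bY_m)$, and then controls the error probability $\Pr[\bX_0 \neq \bS]$ by Hoeffding's inequality applied to the variables $\rho_i \bY_i \in [-\rho_i,\rho_i]$; this is where both the factor $4$ and the exponent $\alpha/2$ come from. You instead compute the expected conditional variance exactly as $\sum_y \Pr[\overline{\bY}=y\mid \bX_0=+1]\Pr[\overline{\bY}=y\mid \bX_0=-1]/\Pr[\overline{\bY}=y]$ (this identity is, incidentally, the $m=1$ degenerate case of the paper's Theorem~\ref{thm:covariance-induction}), bound it by the Bhattacharyya coefficient via AM--GM, and factorize over the leaves using their conditional independence given $\bX_0$ --- which does hold here, and the marginalization over the unnamed extra vertices is handled correctly. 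Each factor is indeed $\sqrt{1-\rho_i^2} \leq e^{-\rho_i^2/2}$, so you obtain $\E\bigl[\Var[\bX_0 \mid \overline{\bY}]\bigr] \leq e^{-\alpha/2}$, which improves the paper's constant by a factor of $4$; since the lemma is only used up to an $O(1)$ factor in the proof of Theorem~C, this makes no downstream difference, but your argument is tighter and arguably cleaner, trading the estimator-plus-concentration viewpoint for a standard Hellinger/Bhattacharyya tensorization.
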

\begin{proof}
    Let us define the random variable
    \[
        \bS = \bS(\overline{\bY}) = \sgn(\rho_1 \bY_1 + \cdots + \rho_m \bY_m).
    \]
    (Take $\sgn(0) = +1$ for definiteness.) For each $y \in \bits^m$ let's write
    \[
        p(y) = \Pr[\bX_0 \neq \bS \mid \overline{\bY} = y].
    \]
    Once we condition on~$\overline{\bY} = y$, the random variable~$\bS$ becomes some fixed sign $s \in \bits$, and the random variable~$\bX_0$ takes on some conditioned distribution, call it~$\bZ$.  Now since $\bZ$ is a $\{\pm 1\}$-valued random variable we have
    \[
        \Var[\bZ] = 4\Pr[\bZ = +1]\Pr[\bZ = -1] \leq 4 \Pr[\bZ \neq s],
    \]
    no matter what~$s$ is.  Thus
    \[
        \Var[\bX_0 \mid \overline{\bY} = y] \leq 4p(y),
    \]
    and so
    \[
        \E\BRAK{\Var[\bX_0] \mid \overline{\bY}} \leq 4\E[p(\overline{\bY})] = 4\Pr[\bX_0 \neq \bS].
    \]
    It thus remains to show that
    \[
        \exp(-\alpha/2) \geq \Pr[\bX_0 \neq \bS] \geq \Pr[\bX_0 (\rho_1 \bY_1 + \cdots + \rho_m \bY_m) \leq 0].
    \]
    The two cases $\bX_0 = \pm 1$ are symmetric, so we may assume $\bX_0 = +1$. Then $\bY_1, \dots, \bY_m$ are independent $\{\pm 1\}$-valued random variables with $\E[\bY_i] = \rho_i$, and we wish to show that
    \[
        \Pr[\rho_1 \bY_1 + \cdots + \rho_m \bY_m \leq 0] \leq \exp(-\alpha/2).
    \]
    This follows immediately from Hoeffding's inequality, applied to the  random variables $\rho_i \bY_i \in [-\rho_i, \rho_i]$.
\end{proof}

\section{Proof of Theorem C}\label{sec:final}

In this section we prove Theorem~C.  Let $\calT = (V,E,\rho)$ be a information flow caterpillar tree with~$t \geq 2$ leaves.  By Lemma~\ref{lem:simple-caterpillar} we may assume $\calT$ is a simple caterpillar.  By Lemma~\ref{lem:push-negations2} we may assume that~$\rho$ has a nonnegative value on all spine edges of~$\calT$.  We write $\bX_1, \dots, \bX_t$ for the vertex random variables along~$\calT$'s spine and $\bY_1, \dots, \bY_t$ for the leaf random variables, with $e_i$ denoting the edge between $\bX_i$ and $\bY_i$.  We write $\bR_i = \bX_i\bY_i$ for the edge random variable that $\calT$ associates to $e_i$, and we write $\rho_i = \rho(e_i) = \E[\bR_i]$.
See Figure~\ref{fig:catTree} for a depiction of this.

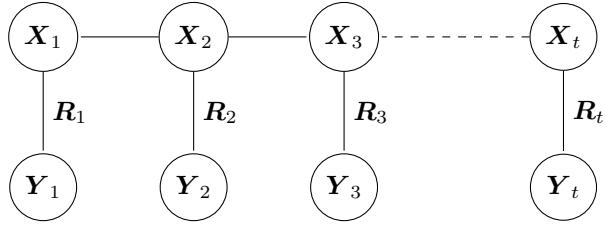
\begin{figure}
\begin{center}
\begin{tikzpicture}[shorten >=1pt,auto,node distance=2cm]
	\node[state] 		(X0)				{$\bX_1$};
	\node[state]		(X1)	[right of=X0]	{$\bX_2$};
	\node[state] 		(Y0) [below of=X0]	{$\bY_1$};		
	\node[state]		(Y1)	[below of=X1]	{$\bY_2$};
	\node[state] 		(X2)	[right of=X1]			{$\bX_3$};
	\node[state]		(Y2)	[below of=X2]	{$\bY_3$};

	\node[state] 		(Xn)		[right =2cm of X2]	{$\bX_t$};
	\node[state] 		(Yn)		[below of=Xn]	{$\bY_t$};

	\path[-] 	(X0)		edge		[right]			node{$\bR_1$}		(Y0)
			(X1)		edge					node{$\bR_2$}		(Y1)
			(X1) 		edge		[above]		node{}		(X0)

			(Xn) 		edge		[dashed, above]				(X2)
			(Xn)		edge		[right]		node{$\bR_t$}		(Yn)
			(X1)		edge					node{}	(X2)
			(X2)		edge					node{$\bR_3$}		(Y2)
				;

\end{tikzpicture}\\
\end{center}
\caption{A Simple Information Flow Caterpillar Tree}
\label{fig:catTree}
\end{figure}

Recall that we wish to show
\begin{equation}                            \label{eqn:conj-cat}
        \avg_{\substack{\text{distinct pairs} \\ u, v \in [t]}} \E\Brak{\Abs{\Cov[\bY_u,\bY_v]} \mid (\bY_j)_{j \in [t] \setminus \{u,v\}}} \leq O(1/t).
\end{equation}
Let us suppose for some time that the pair $u, v \in [t]$ is fixed.  For brevity we'll write $\bcalY = (\bY_j)_{j \in [t] \setminus \{u,v\}}$ for the leaf random variables other than $\bY_u, \bY_v$.  Then
\begin{align}
    \E\Brak{\Abs{\Cov[\bY_u,\bY_v]} \mid \bcalY} = \E\Brak{\Abs{\Cov[\bX_u\bR_u,\bX_v\bR_u]} \mid \bcalY} &= \E\Brak{\Abs{\rho_u\rho_v\Cov[\bX_u,\bX_v]} \mid \bcalY}  \nonumber
    \\ &= \ABS{\rho_u} \ABS{\rho_v} \E\Brak{\Abs{\Cov[\bX_u,\bX_v]} \mid \bcalY} \label{eqn:final-upper-me}
\end{align}
where the second equality uses that $\bR_u, \bR_v$ are independent of $(\bX_u, \bX_v, \bcalY)$.

Given $u, v$, let $\calT'$ denote $\calT$ with edges $e_u, e_v$ deleted.  We may apply our monotonicity result Theorem~\ref{thm:monotonic} to~$\calT'$, with $P$ being the spine path between $\bX_u$ and $\bX_v$.  (Note that we earlier arranged for all spine edges to have nonnegative correlation, as required for Theorem~\ref{thm:monotonic}.)  We conclude that \emph{if} the edge correlations along~$P$ were raised to~$1$, this could only increase the quantity $\E\Brak{\Abs{\Cov[\bX_u,\bX_v]} \mid \bcalY}$ appearing in~\eqref{eqn:final-upper-me}.  We could further upper-bound this quantity as follows:  Write $\calT_{uv}$ for the modification of $\calT'$ in which~$P$ is contracted to a single vertex with random variable called~$\bX_0$ (as in Lemma~\ref{lem:glue}).  Then by applying the inhomogeneous star result, Lemma~\ref{lem:inhom-star} to~$\calT_{uv}$, we would get
\[
    \E\Brak{\Abs{\Cov[\bX_u,\bX_v]} \mid \bcalY} = \E\Brak{\Var[\bX_0] \mid \bcalY} \leq 4\exp(-\alpha(u,v)/2),
\]
where
\[
    \alpha(u,v) \coloneqq \sum \SET{\rho_i^2 : i \text{ is between $u$ and $v$}}.
\]
Putting these observations together, we conclude that for a fixed pair $u, v \in [t]$,
\[
    \E\Brak{\Abs{\Cov[\bY_u,\bY_v]} \mid (\bY_j)_{j \in [t] \setminus \{u,v\}}} \leq \ABS{\rho_u} \ABS{\rho_v} \cdot 4 \exp(-\alpha(u,v)/2).
\]
Thus to complete the proof of~\eqref{eqn:conj-cat} we need to show
\begin{equation}                                    \label{eqn:final-ineq1}
    (*) \coloneqq \avg_{\substack{\text{distinct pairs} \\ u, v \in [t]}} \SET{\ABS{\rho_u} \ABS{\rho_v} \cdot \exp(-\alpha(u,v)/2)} \leq O(1/t).
\end{equation}

This is now simply a combinatorial problem concerning the list of numbers $\rho_1, \cdots, \rho_t$.

We solve the problem as follows.  First, we'd like to switch $u$ and $v$ to being drawn \emph{without} replacement.  Note that
\[
    (*) = \E_{\substack{\bu, \bv \sim [t] \\ \text{uniformly, independently}}}\BRAK{
                \SET{
                \begin{array}{cl}
                    \ABS{\rho_{\bu}} \ABS{\rho_{\bv}} \cdot \exp(-\alpha(\bu,\bv)/2) & \text{if $\bu \neq \bv$,} \\
                    (*) & \text{if $\bu = \bv$.}
                \end{array}
                }}
\]
Since $\ABS{\rho_{\bu}} \ABS{\rho_{\bv}} \cdot \exp(-\alpha(\bu,\bv)/2) \in [0,1]$ always, and since $\Pr[\bu = \bv] = 1/t$, the above differs from
\begin{equation}                                        \label{eqn:final-ineq2}
    \E_{\substack{\bu, \bv \sim [t] \\ \text{uniformly, independently}}} \BRAK{\ABS{\rho_{\bu}} \ABS{\rho_{\bv}} \cdot \exp(-\alpha(\bu,\bv)/2)}
\end{equation}
by at most~$2/t$.  Thus to show~\eqref{eqn:final-ineq1}, it suffices to upper-bound~\eqref{eqn:final-ineq2} by~$O(1/t)$.  To do this, we first apply Cauchy--Schwarz, obtaining
\begin{align}
    \E_{\bu, \bv \sim [t]} \BRAK{\ABS{\rho_{\bu}} \ABS{\rho_{\bv}} \cdot \exp(-\alpha(\bu,\bv)/2)} &\leq \sqrt{\E_{\bu,\bv \sim [t]} \BRAK{\rho_{\bu}^2 \cdot \exp(-\alpha(\bu,\bv)/2)}} \sqrt{\E_{\bu,\bv \sim [t]} \BRAK{\rho_{\bv}^2 \cdot \exp(-\alpha(\bu,\bv)/2)}} \nonumber \\
    &= \E_{\bu \sim [t]} \BRAK{\rho_{\bu}^2 \cdot \exp(-\alpha(\bu,\bv)/2)}, \label{eqn:final-ineq3}
\end{align}
the last equality because $\bu$ and $\bv$ are symmetrically distributed.  Let's introduce the following events:
\[
    A_0 = \text{``$\alpha(\bu,\bv) \in [0,1)$'',} \qquad A_k = \text{``$\alpha(\bu,\bv) \in [2^{k-1},2^k)$''}, \quad k \in \Z^+.
\]
Using the fact that $\sum_{k \geq 0} \bone_{A_k} \equiv 1$, we have that~\eqref{eqn:final-ineq3} equals
\[
    \E_{\bu,\bv \sim [t]} \BRAK{\rho_{\bu}^2 \cdot \sum_{k \geq 0} \bone_{A_k} \cdot \exp(-\alpha(\bu,\bv/2))} \leq \E_{\bu,\bv \sim [t]} \BRAK{\rho_{\bu}^2 \cdot \sum_{k \geq 0} \bone_{A_k} \cdot e^{1/4} \exp(-2^{k-2})}.
\]
Here we essentially lower-bounded $\alpha(\bu,\bv)/2$ by $2^{k-2}$ on the event that~$A_k$ occurs --- except, that is not quite correct when $k = 0$; this why we included the factor $e^{1/4}$, to cover the $k = 0$ case.  Thus it remains to show
\begin{equation}                            \label{eqn:final-ineq4}
    \sum_{k \geq 0} \exp(-2^{k-2}) \cdot \E_{\bu,\bv \sim [t]} \BRAK{\rho_{\bu}^2 \cdot \bone_{A_k}} \leq O(1/t).
\end{equation}
To show this, let's consider a fixed integer $k \geq 0$ and imagine that in the expectation, $\bv \sim [t]$ is chosen first.  Once $\bv$ is chosen, we define interval $U^-_k(\bv) \subseteq [t]$ to be the set of all possible $\bu < \bv$ such that the event $A_k$ occurs.  We define $U^+_k(\bv)$ similarly, but for $\bu > \bv$.  Figure~\ref{fig:depictU} shows a small example.  Denote the union of $U^-_k(\bv)$ and $U^+_k(\bv)$ by $U_k(\bv)$ .

\begin{figure}
\begin{center}

\begin{tikzpicture}[shorten >=1pt, auto, node distance=1cm]
\tikzset{every state/.style={minimum size=0pt}}
	\node[state]		(S0)						{};
	\node[state]		(S1)		[right=0.6cm of S0]	{};
	\node[state]		(S2)		[right =0.6cm of S1]	{};
	\node[state]		(S3)		[right =0.6cm of S2]	{};
	\node[state]		(S4)		[right =0.6cm of S3]	{};
	\node[state]		(S5)		[right =0.6cm of S4]	{};
	\node[state]		(S6)		[right =0.6cm of S5]	{};
	\node[state]		(S7)		[right=0.6cm of S6]	{};
	\node[state]		(S8)		[right=0.6cm of S7]	{};
	\node[state]		(S9)		[right=0.6cm of S8]	{};
	\node[state]		(S10)	[right=0.6cm of S9]	{};
	\node[state]		(S11)	[right=0.6cm of S10]	{};
	\node[state]		(S12)	[right=0.6cm of S11]	{};
	
	\node[state]		(L0)		[below of = S0]		{};
	\node[state]		(L1)		[below of = S1]		{};	
	\node[state]		(L2)		[below of = S2]		{};
	\node[state]		(L3)		[below of = S3]		{};
	\node[state]		(L4)		[below of = S4]		{};
	\node[state]		(L5)		[below of = S5]		{};
	\node[state, fill=black]		(L6)		[below of = S6]		{};
	\node[state]		(L7)		[below of = S7]		{};
	\node[state]		(L8)		[below of = S8]		{};
	\node[state]		(L9)		[below of = S9]		{};
	\node[state]		(L10)		[below of = S10]		{};
	\node[state]		(L11)		[below of = S11]		{};
	\node[state]		(L12)		[below of = S12]		{};

	\path[-]
	(S0) edge (S1)
		edge node [right] {\footnotesize{$.99$}} (L0)
	(S1) edge (S2)
		edge node [right] {\footnotesize{$.96$}} (L1)
	(S2) edge (S3)
		edge node [right] {\footnotesize{$.98$}} (L2)
	(S3) edge (S4)
		edge node [right] {\footnotesize{$.97$}} (L3)
	(S4) edge (S5)
		edge node [right] {\footnotesize{$.99$}} (L4)
	(S5) edge (S6)
		edge node [right] {\footnotesize{$.96$}} (L5)
	(S6) edge (S7)
		edge node [right] {} (L6)
	(S7) edge (S8)
		edge node [left] {\footnotesize{$.99$}} (L7)
	(S8) edge (S9)
		edge node [left] {\footnotesize{$.96$}} (L8)
	(S9) edge (S10)
		edge node [left] {\footnotesize{$.01$}} (L9)
	(S10) edge (S11)
		edge node [left] {\footnotesize{$.97$}} (L10)
	(S11) edge (S12)
		edge node [left] {\footnotesize{$.99$}} (L11)
	(S12) edge node [left] {\footnotesize{$.96$}} (L12);
	
	\draw[decorate,decoration={brace,mirror,raise=6pt}, thick] ($(L5.south west) + (-0.2, 0)$)--($(L6.south west) + (-0.4, 0)$)node [black,midway,yshift=-24pt] {\footnotesize
$U_0^-$};
	\draw[decorate,decoration={brace,mirror,raise=6pt}, thick] ($(L4.south west)+ (-0.2, 0)$)--($(L5.south west)+ (-0.4, 0)$) node [black,midway,yshift=-24pt] {\footnotesize
$U_1^-$};
\draw[decorate,decoration={brace,mirror,raise=6pt}, thick] ($(L2.south west)+ (-0.2, 0)$)--($(L4.south west) + (-0.4, 0)$)node [black,midway,yshift=-24pt] {\footnotesize
$U_2^-$};
\draw[decorate,decoration={brace,mirror,raise=6pt}, thick] ($(L0.south west)+ (-0.2, 0)$)--($(L2.south west)+ (-0.4, 0)$) node [black,midway,yshift=-24pt] {\footnotesize
$U_3^-$};
	\draw[decorate,decoration={brace,mirror,raise=6pt}, thick] ($(L6.south east) + (0.4,0)$)--($(L7.south east) +  (0.2,0)$) node [black,midway,yshift=-24pt] {\footnotesize
$U_0^+$};
	\draw[decorate,decoration={brace,mirror,raise=6pt}, thick] ($(L7.south east)+  (0.4,0)$)--($(L9.south east) +  (0.2,0)$) node [black,midway,yshift=-24pt] {\footnotesize
$U_1^+$};
	\draw[decorate,decoration={brace,mirror,raise=6pt}, thick] ($(L9.south east) +  (0.4,0)$)--($(L11.south east) +  (0.2,0)$) node [black,midway,yshift=-24pt] {\footnotesize
$U_2^+$};
\draw[decorate,decoration={brace,mirror,raise=6pt}, thick] ($(L11.south east) +  (0.4,0)$) --($(L12.south east) +  (0.2,0)$) node [black,midway,yshift=-24pt] {\footnotesize
$U_3^+$};

\draw node [below = 0.15cm of L6] {$\bv$};

\end{tikzpicture}

\end{center}
\caption{A small example illustrating the indices of  $U_k(\bv)$.  The label on edge $e$ is $\rho_e^2$.}
\label{fig:depictU}
\end{figure}
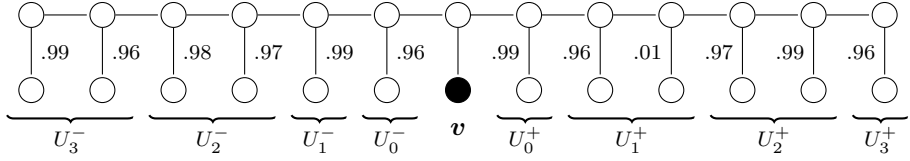
Furthermore, we must have
\[
    \phantom{\quad \forall k \geq 0.} \sum_{\mathclap{u \in U_k(\bv)}} \rho_u^2   = \sum_{\mathclap{u \in U_k^+(\bv)}} \rho_u^2 + \sum_{\mathclap{u \in U_k^-(\bv)}} \rho_u^2  \leq 2^{k} + 2^k  = 2^{k+1} \quad \forall k \geq 0.
\]

It follows that we have the upper bound
\[
    \E_{\bu,\bv \sim [t]} \BRAK{\rho_{\bu}^2 \cdot \bone_{A_k}} = \E_{\bv \sim [t]} \BRAK{\sum_{u \in U_k(\bv)} \Pr[\bu = u] \rho_{u}^2} = (1/t) \E_{\bv \sim [t]} \BRAK{\sum_{u \in U_k(\bv)} \rho_{u}^2} \leq (1/t) \E_{\bv \sim [t]} \BRAK{2^{k+1}} = 2^{k+1}/t.
\]
Substituting this into~\eqref{eqn:final-ineq4}, it remains to observe that indeed
\[
    \sum_{k \geq 0} \exp(-2^{k-2}) \cdot 2^{k+1} \leq O(1).
\]
The proof of Theorem~C is complete.

\section{Conclusions}

Lacking any directions for proving the main Conjecture~A, we believe that Conjecture~B (the case of general information flow trees) is a good place to start.  Having proved Theorem~C (the case of caterpillars), a natural next case to consider is an information flow tree with the property that each leaf is at distance at most \emph{two} from a central spine.  By the transformations in Section~\ref{sec:prelims}, it suffices to consider the case that each spine node has a single edge hanging off it, which in turn has an inhomogeneous star hanging off it.  Perhaps some of the ``reconstruction'' results from~\cite{evans} in terms of effective electrical resistance could be of use here.  Another interesting special case of Conjecture~B that one could try to resolve is that of a complete binary tree in which all edge correlations have the same value~$\rho$.  (This is the most heavily-studied information flow tree.)  We believe that this case satisfies Conjecture~B by a wide margin for all~$\rho$, even with a sub-inverse-polynomial bound in place of~$O(1/t)$.  Perhaps the formula in our Theorem~\ref{thm:covariance-induction} could help prove this.

\subsection*{Acknowledgments}
We thank Yuan Zhou for several early discussions on the topic of this paper.

\bibliographystyle{plain}
\bibliography{cat}
\end{document}